\colorlet{commentcolor}{gray} % Used at reduced opacity, currently 20%
\newcommand\sdots{\makebox[1em][c]{.\hfil.\hfil.}}
\newcommand{\einsum}[4][\#]{#1\left(\begin{aligned}#2 \rightarrow #3;\;#4\end{aligned}\right)}
\def\domain{\text{dom}}
\newcommand{\tensor}[3][]{%
	\ifthenelse{\equal{#1}{}}{%
		\ifthenelse{\equal{#3}{}}{%
			#2%
		}{%
			#2(#3)%
		}%
	}{%
		\ifthenelse{\equal{#3}{}}{%
			#2_{#1}%
		}{%
			#2_{#1}(#3)%
		}%
	}%
}
\def\N{\mathbb{N}}
\def\R{\mathbb{R}}
\DeclareMathAlphabet{\mymathbb}{U}{BOONDOX-ds}{m}{n}
\begin{document}

\title{The Syntax and Semantics of \emph{einsum}}

\author{Maurice Wenig\corresponding\\
	Institute of Computer Science\\
	Friedrich-Schiller-University Jena \\
	Jena, Germany \\
	maurice.wenig@uni-jena.de
	\and Paul G.\ Rump\\
	Institute of Computer Science\\
	Friedrich-Schiller-University Jena \\
	Jena, Germany \\
	paul.gerhardt.rump@uni-jena.de
	\and Mark Blacher\\
	Institute of Computer Science\\
	Friedrich-Schiller-University Jena \\
	Jena, Germany \\
	mark.blacher@uni-jena.de
	\and Joachim Giesen\\
	Institute of Computer Science\\
	Friedrich-Schiller-University Jena \\
	Jena, Germany \\
	joachim.giesen@uni-jena.de
}

\maketitle

\runninghead{M.\ Wenig, P.\ G.\ Rump, M.\ Blacher, J.\ Giesen}{Syntax and Semantics of \emph{Einsum}}

\begin{abstract} % Note: I call previous versions simply \emph{einsum} *notations*, and our version a *language*. This is mostly for clarity, but I think there might be some deeper truth behind the distinction.
	In 2011, \emph{einsum} was introduced to NumPy as a practical and convenient notation for tensor expressions in machine learning, quantum circuit simulation, and other fields.
	It has since been implemented in additional Python frameworks such as PyTorch and TensorFlow, % JAX, (Keras, too, apparently.)
	as well as in other programming languages such as Julia. % Repeatedly, in fact.
	Despite its practical success, the \emph{einsum} notation still lacks a solid theoretical basis, and is not unified across the different frameworks, limiting opportunities for formal reasoning and systematic optimization.
	In this work, we discuss the terminology of tensor expressions
	and provide a formal definition of the \emph{einsum} language.
	Based on this definition, we formalize and prove important equivalence rules for tensor expressions
	and highlight their relevance in practical applications.
\end{abstract}

\begin{keywords}
	einsum, tensors, tensor expressions, semantic equivalence
\end{keywords}

\clearpage

\section{Introduction}

Historically, the most widespread language for describing machine learning operations has been linear algebra.
Linear algebra captures scalars, vectors, and matrices, but does not support operations on higher-order tensors, which are becoming increasingly important in modern machine learning~\cite{TensorComprehensions}.
Moreover, linear algebra assigns each possible operation its own symbol, which becomes impractical when extended to tensors of arbitrary order.
In the \emph{einsum} notation, tensor expressions are written as an operation on index strings, making it a viable option as a machine modeling language.
Through implementations in computational frameworks such as NumPy \cite{Harris2020}, PyTorch \cite{Paszke2019}, and TensorFlow \cite{TensorFlow}, \emph{einsum} has already proven itself a practical and convenient notation for general tensor expressions.
However, it is not consistently defined across the different frameworks,
which restricts the use of \emph{einsum} as a common interface and limits opportunities for systematic optimization.
Additionally, the \emph{einsum} notation lacks a solid theoretical basis,
precluding formal reasoning about and mathematical proofs of its algebraic properties.

One such property is the fact that semantically equivalent \emph{einsum} expressions often have many different syntactic representations.
This is highly relevant in practice because the syntactic representation of a tensor expression can strongly impact its suitability for a given task.
Often, one representation may be best suited for efficient evaluation, a second for automatic differentiation, a third for the deduction of relevant properties such as symmetry or convexity,
and yet another for human readability.
For example, the matrix-matrix-vector product $A \cdot B \cdot v$ can be written as:
\begin{itemize}
	\item \texttt{einsum}$\left(\begin{aligned}ij, jk, k \rightarrow i; A, B, v\end{aligned}\right)$,
	      which is concise and thus human-readable,
	\item \texttt{einsum}$\left(\begin{aligned}ij, j\rightarrow i; A,
				      \texttt{einsum}\left(\begin{aligned}jk, k\rightarrow j; B, v \end{aligned}\right)
			      \end{aligned}\right)$, which specifies an efficient evaluation order that decomposes the operation into two matrix-vector products,	completely avoiding the costly matrix-matrix product $A\cdot B$, and
	\item \texttt{einsum}$\left(\begin{aligned}i, ij, jk, k \rightarrow i; 1, A, B, v\end{aligned}\right)$,
	      which is suitable for differentiation with respect to the matrix $A$, resulting in the derivative expression
	      \texttt{einsum}$\left(\begin{aligned}i, jk, k \rightarrow iij; 1, B, v\end{aligned}\right)$.
	      Without the added ones-vector, all information about the $i$-axis would be lost.
\end{itemize}
So far, however, there are no formal proofs for \textbf{equivalence rules} by which \emph{einsum} expressions can be reshaped.
Therefore, we present a unified definition of the syntax and semantics of \emph{einsum} in this work.
We formally prove the commutativity, associativity, and distributivity of \emph{einsum}, as well as nesting and denesting rules, and several other semantic equivalence rules.
This addresses the theoretical underpinnings of \emph{einsum} and thus complements existing work on its computational aspects~\cite{Breuer2025,Staudt2024,Staudt25,Cooper2020,Hong2022}.

This article is structured as follows:
Because the tensor terminology that is used in the literature is not unified, we start by introducing the terminology that we use throughout the paper in the next section, \hyperref[sec:terminology]{Section~\ref*{sec:terminology}}.
In \hyperref[sec:syntax]{Section~\ref*{sec:syntax}}, we propose a syntax for the \emph{einsum} language and explain reasons why we deviate from existing implementations of \emph{einsum} in certain details.
In \hyperref[sec:semantics]{Section~\ref*{sec:semantics}}, we define the semantics of \emph{einsum},
which are then used in \hyperref[sec:properties]{Section~\ref*{sec:properties}}, \hyperref[sec:nesting]{Section~\ref*{sec:nesting}}, and \hyperref[sec:auxiliary]{Section~\ref*{sec:auxiliary}}
to prove several properties that are important to efficiently evaluate and automatically differentiate \emph{einsum} expressions.
The article is then concluded in \hyperref[sec:conclusions]{Section~\ref*{sec:conclusions}}.

\section{Terminology}
\label{sec:terminology}

In this section, we clarify the fundamental terminology of tensors and their components.
Formally, we define a tensor as a mapping from positions (multi-indices) to values.

\begin{definition}[Tensors]
	Given $o\in\mathbb{N}_0$ and $d_1, \ldots, d_o \in\mathbb{N}$ as well as a set $R$, let $\mathcal{X} = [d_1] \times\ldots\times [d_o]$.
	A \textbf{tensor} is a mapping $T:\mathcal{X} \to R$ of \textbf{positions} $(x_1, \ldots, x_o)\in\mathcal{X}$ to \textbf{entries} $\tensor{T}{x_1,\ldots,x_o}\in R$.
	Here, $\mathbb{N}_0$ denotes the set of natural numbers including zero, $\mathbb{N}$ denotes the set of natural numbers without zero, and $[d]$ denotes the set $\{1,\ldots,d\}$.

	We call $o$ the \textbf{order} of $T$ and say that $T$ has $o$ \textbf{axes} with respective \textbf{axis lengths} $d_1, \ldots, d_o$.
	In the literature, an alternative term for axis is `mode'.
\end{definition}

\begin{CommentBox}{Tensors and data structures}
	Tensors are frequently defined as multidimensional arrays.
	In contrast, our more abstract functional definition remains agnostic to any such specific data structure,
	and therefore also applies to lists (typically used to represent sparse tensors),
	Tensor-Network-Based-Decision-Diagrams (TDDs; useful when representing highly structured tensors)~\cite{Hong2022},
	or any other data structure.
\end{CommentBox}

\paragraph{The value set $R$.}
Although the definition allows for arbitrary sets $R$ of possible tensor entries,
the entries in practically relevant problems are typically real numbers ($R = \R$) or
elements of some other \emph{commutative semiring}.
In the following, we assume that $R$ is a commutative semiring $(R, \oplus, \otimes)$, where $\oplus$ is an aggregation function and $\otimes$ is a commutative combination function.
Beyond the standard arithmetic semiring $(\mathbb{R},+,*)$, other notable examples of commutative semirings are the Viterbi and Tropical semirings.

\paragraph{Scalars.}
By convention, we consider the empty product $\prod_{i=1}^{0} [d_i]$ to be the set $\{ (~) \}$ that only contains the empty tuple.
Thus, scalars fit into the above definition as tensors of order $o=0$.
Specifically, a scalar $c\in R$ is uniquely represented as the tensor $T: \{ (~) \} \to R, (~) \mapsto c$.

\begin{definition}[Delta tensors]
	For $o\in\mathbb{N}_0$, $d_1, \sdots, d_o \in\mathbb{N}$, and a semiring $R$, the tensor
	$\delta_o: [d_1] \times\sdots\times [d_o] \times [d_1] \times\sdots\times [d_o] \to R$ with entries
	\begin{equation*}
		\delta_o (p_1, \sdots, p_o, q_1, \sdots, q_o) =
		\begin{cases}
			1, \text{ if } (p_1, \sdots, p_o) = (q_1, \sdots, q_o) \\
			0, \text{ otherwise.}
		\end{cases}
	\end{equation*}
	is called a \textbf{delta tensor} of order $2o$.
	In this context, $1$ and $0$ refer to the $1$-element and $0$-element of $R$, respectively.

	Examples of delta tensors are the scalar $\delta_0 = 1$ and the unit matrix $\delta_1 = \mathds{1}$.
	The precise axis lengths $d_1, \sdots, d_o$ of a delta tensor are often omitted when they can be inferred from context.
\end{definition}

\section{Syntax}
\label{sec:syntax}

An \emph{einsum} expression has the following generic form:
$$\einsum{I_1, \sdots, I_n}{I}{T_1, \sdots, T_n},$$
where the hash symbol $\#$ replaces the word \texttt{einsum} as a more compact and recognizable marker for the beginning of an \emph{einsum} expression.
Inside the parentheses, the expression consists of three parts:
A number of index strings $I_1, \sdots, I_n$ on the left of the arrow $\rightarrow$, which we refer to as \textbf{input index strings},
one index string $I$ on the right of the arrow, which we refer to as the \textbf{output index string},
and a number of subexpressions $T_1, \sdots, T_n$ on the right of the output index string, which we refer to as \textbf{arguments}.
We also refer to the combination of input index strings, arrow, and output index string as the \textbf{format string}.

A valid \emph{einsum} expression must also satisfy the following additional constraints.
For these definitions, let $n \in \N$ be an arbitrary number of arguments, and let $\tensor[i]{T}{}$ be a tensor of order $o_i\in\N$ for all $i \in [n]$.

\paragraph{I (Index strings and sets)}
\emph{Einsum} expressions use index strings to specify how one or several input tensors are combined into a single output tensor.
Each index string consists of index symbols $s\in S$, where the potentially infinite set $S$ of all index symbols is arbitrary.
We will represent individual index symbols $s\in S$ as lowercase letters and index strings $I_i \in S^{o_i}$ as uppercase letters.
To the left of the arrow $\rightarrow$, an \emph{einsum} expression requires exactly one, possibly empty, index string $I_i = (s_{i1}, \ldots, s_{io_i})$ for each argument $\tensor[i]{T}{}$.
On its right is the output index string $I$ that corresponds to the result tensor.
The order of index symbols in an index string matters, and multiple appearances of the same index symbol are possible.
We denote the set of index symbols in an index string $I_i$ as $\sigma(I_i) \subseteq S$.

\begin{CommentBox}{The index symbol set $S$ in practice}
	Here, we assume a potentially infinite index symbol set $S$.
	In practice, the choice of an (effectively finite) index symbol set can be very important.
	For example, \emph{einsum} in PyTorch~\cite{Paszke2019} only allows $2\cdot 26 = 52$ alphabetic index symbols.
	While literal index symbols are useful for human readability, we recommend that a real-world implementation should additionally accept arbitrary integers in order to handle more axes.
\end{CommentBox}

\paragraph{II (Axis lengths)}
In an \emph{einsum} expression, each index symbol $s_{ij}$ corresponds to the $j$-th axis of the tensor $\tensor[i]{T}{}$.
For the expression to be valid, all axes corresponding to the same index symbol must match in length.
In other words, let $d_{ij} \in \N$ denote the size of the $j$-th axis of $\tensor[i]{T}{}$ for $i \in [n], j \in [o_i]$.
Then for two identical index symbols $s_{ij} = s_{i'j'}$ in different locations $i,i' \in [n], j \in [o_i], j' \in [o_{i'}]$, it must hold that
\begin{equation*}
	d_{ij} = d_{i'j'} \text{ and thus } [d_i] = [d_{i'}].
\end{equation*}
We denote the uniquely determined length of all axes that an index symbol $s = s_{ij}\in S$ corresponds to as $d_s := d_{ij}$.

\paragraph{III (Output string)}
For the result tensor of the \emph{einsum} expression to be well-defined, the lengths of its axes must be specified.
This can only be the case if every index symbol in $I$ also appears in at least one of the input index strings $I_i$.
Thus, the third condition for a valid \emph{einsum} expression is:
\begin{equation*}
	\sigma(I) \subseteq \bigcup_{i \in [n]} \sigma(I_i).
\end{equation*}

\begin{CommentBox}{Regularity of the language of \emph{einsum} format strings}
	For a finite set $S$ of index symbols, the language of \emph{einsum} format strings over $S$ is regular.
	It follows that the \emph{einsum} format strings over a finite set $S$ of index symbols can be enumerated~\cite{EnumerateContextfreeLanguages}.
\end{CommentBox}

For examples of \emph{einsum} expressions and corresponding expressions in linear algebra, see \hyperref[tab:einsum_examples]{Table~\ref*{tab:einsum_examples}}.

\begin{table*}[h!]
	\centering
	\caption[Einsum notation]{Examples of the \emph{einsum} notation and the corresponding expressions in linear algebra.}
	\label{tab:einsum_examples}
	\begin{tabular}{lcl}
		\toprule
		Operation                               & Linear algebra   & \emph{einsum}                   \\
		\midrule
		matrix product                          & $A \cdot B$      & $\einsum[\#]{ij, jk}{ik}{A, B}$ \\
		matrix transposition                    & $A^\top$         & $\einsum[\#]{ij}{ji}{A}$        \\
		elementwise product of two vectors      & $x \otimes y$    & $\einsum[\#]{i,i}{i}{x,y}$      \\
		inner product of two vectors (a scalar) & $x^\top y$       & $\einsum[\#]{i, i}{}{x, y}$     \\
		outer product of two vectors (a matrix) & $xy^\top$        & $\einsum[\#]{i, j}{ij}{x, y}$   \\
		extract the diagonal of a matrix        & \text{diag}($A$) & $\einsum[\#]{ii}{i}{A}$         \\
		broadcast a vector to a diagonal matrix & \text{diag}($v$) & $\einsum[\#]{i}{ii}{v}$         \\
		\bottomrule
	\end{tabular}
\end{table*}

\paragraph{Differences from popular libraries}

Compared to NumPy \cite{Harris2020}, PyTorch \cite{Paszke2019} and TensorFlow \cite{TensorFlow}, our definition has two significant differences:
First, the output index string always has to be specified, which is important for \emph{einsum} to be commutative, and second, the output index string can include the same index symbol more than once, which allows expressions like $\text{diag}(v) = \einsum[\#]{i}{ii}{v}$.

\section{Semantics}
\label{sec:semantics}

In this section, we define the semantics of the \emph{einsum} language.
To do so, we first introduce global positions and projections, which allow us to express from which entries in the input
tensors a given entry in the output tensor is computed.

\begin{definition}[Global positions and projections]
	Let $S = \{s_1, \ldots, s_l\}$ be the set of all index symbols used in an \emph{einsum} expression, and let $d_s$ be the length of the axis that corresponds to the index symbol $s$.
	A global position $\hat{x}$ is a tuple $(\hat{x}_{s_1}, \dots, \hat{x}_{s_l})$ with $\hat{x}_s \in [d_s]$ for each index symbol $s \in S$.
	Then $\mathcal{X} := \prod_{s \in S} [d_s]$ is the set of all global positions over $S$.
	Given an index string $I = (s'_1, \ldots, s'_o)\in S^o$, a global position $\hat{x} = (\hat{x}_{s_1}, \ldots, \hat{x}_{s_l}) \in \mathcal{X}$ can be projected to a position $(\hat{x}:I) := (\hat{x}_{s'_1}, \ldots, \hat{x}_{s'_o})$ in a particular tensor.
\end{definition}

\begin{CommentBox}{Scalars in \emph{einsum}}
	Scalar operands or results correspond to the empty index string $\lambda$ with length $0$.
	The empty index string projects every possible global position $\hat{x}$ onto the empty tuple, i.e.\ $\hat{x}:\lambda = (~)$.
\end{CommentBox}

\begin{definition}[The semantics of \emph{einsum}]
	For some $n \in \N$ and each $i \in [n]$, let $T_i$ be a tensor of order $o_i$ with a corresponding index string $I_i$.
	Further, let $I = (s_1, \dots, s_o)$ be an output index string that satisfies the conditions in \hyperref[sec:syntax]{Section~\ref*{sec:syntax}},
	let $(R, \oplus, \otimes)$ be a commutative semiring,
	and let $\mathcal{X}$ be the set of global positions.
	We define the value of the \emph{einsum} expression
	$$\tensor{T}{} = \einsum[\#]{I_1, \ldots, I_n}{I}{\tensor[1]{T}{}, \ldots, \tensor[n]{T}{}}$$
	as a sum of products in the semiring $R$.
	Specifically, $\tensor{T}{}$ is the $o$-th order tensor with the domain
	$$\domain(\tensor{T}{}) = [d_{s_1}] \times \ldots \times [d_{s_o}],$$
	where
	$$\forall x \in \domain(\tensor{T}{}): \tensor{T}{x} = \bigoplus\limits_{\substack{\hat{x} \in \mathcal{X} \\ \hat{x}:I=x}} \bigotimes\limits_{i = 1}^{n} \tensor[i]{T}{\hat{x}:I_i}.$$
\end{definition}

In words: We calculate the entry at a position $x$ by \textbf{aggregating} (e.g.\ summing over) all options $\hat{x}$ to assign values to index symbols such that the output index string $I$ projects them onto $x$.
For each such $\hat{x}$, we add the \textbf{combination} (e.g.\ product) of each respective entry in the individual input tensors.
If an \emph{einsum} expression consists entirely of scalars, then the set of used index symbols is empty.
The only possible global position in this case is $\hat{x} = (~)$, meaning that the aggregation is only over a single term (e.g.\ a product of scalars in the case of the standard arithmetic semiring).

\begin{CommentBox}{Different semirings}
	While the language itself remains agnostic of the underlying semiring (i.e.\ the aggregation and combination operations), an efficient evaluating algorithm typically does not.
	For example, when aggregating with the maximum operation, branch-and-bound algorithms might be preferable to algorithms that compute every single entry.
\end{CommentBox}

\medskip % So we don't hug the comment box too closely.

In machine learning, differentiability is an indispensable property because it permits efficient optimization.
But while \emph{einsum} expressions over the standard arithmetic semiring $(\R, +, \cdot)$ are differentiable, their derivative is not always another \emph{einsum} expression, but can instead require an elementwise aggregation (here: an elementwise sum) of multiple \emph{einsum} expressions.
As an addition to the \emph{einsum} language, we therefore define elementwise aggregations as follows.

\begin{definition}[Elementwise aggregation]
	Given a semiring $(R, \oplus, \otimes)$ and two tensors $S$ and $T$ over that semiring such that $\domain(S) = \domain(T)$,
	we define the \textbf{elementwise aggregate} $S\oplus T$ of these tensors as:
	$$ \forall \hat{x} \in\domain(T): (S\oplus T)(\hat{x}) = S(\hat{x}) \oplus T(\hat{x}), $$
	where the aggregation operation $\oplus$ is applied elementwise.
\end{definition}

\begin{CommentBox}{Differentiability of \emph{einsum} expressions over the arithmetic semiring}
	Expressions that use \emph{einsum} together with elementwise sums of tensors are closed under differentiation~\cite{tensorcalculus}.
	Such expressions are infinitely differentiable, but may first need to be reshaped into equivalent expressions due to technical reasons.
	Specifically, expressions like the one in our introductory example must be reshaped prior to differentiation to avoid the issue of disappearing index information.
\end{CommentBox}

\section{Algebraic properties}
\label{sec:properties}

Based on the formal definition of \emph{einsum} expressions presented in the previous sections, we now explore which syntactically different expressions are semantically equivalent.
First, we show three kinds of semantic equivalence rules as algebraic properties of \emph{einsum}: (1) \emph{Einsum} is commutative, meaning we can reorder factors, (2) \emph{einsum} is associative, meaning we can change the order of binary operations in which an expression is evaluated, and (3) \emph{einsum} is distributive with respect to the element-wise aggregation $\oplus$ of tensors.

\subsection{Commutativity}
\label{sec:properties:commutativity}

Just like in a standard product, commutativity in \emph{einsum} means that reordering factors does not change the result.
Note, however, that a `factor' in an \emph{einsum} expression consists of not only the operand tensor itself, but also its corresponding index string.

\begin{theorem}[Commutativity of \emph{einsum}]
	Given $n$ tensors $\tensor[1]{T}{}, \ldots, \tensor[n]{T}{}$ with corresponding index strings $I_1, \ldots, I_n$ and a result string $I$, we can apply any permutation $\pi$ of $n$ objects without changing the value of the \emph{einsum} expression.
	That is:
	\begin{equation*}
		\einsum{I_1, \ldots, I_n}{I}{\tensor[1]{T}{}, \ldots, \tensor[n]{T}{}} = \einsum{\pi(I_1, \ldots, I_n)}{I}{\pi(\tensor[1]{T}{}, \ldots, \tensor[n]{T}{})}
	\end{equation*}
\end{theorem}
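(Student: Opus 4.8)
The plan is to unfold both sides using the semantics definition and observe that the only difference between them is the order in which a finite product in a commutative semiring is taken. Concretely, for a fixed output position $x \in \domain(\tensor{T}{})$, the left-hand side evaluates to
\begin{equation*}
	\bigoplus_{\substack{\hat{x} \in \mathcal{X} \\ \hat{x}:I=x}} \bigotimes_{i=1}^{n} \tensor[i]{T}{\hat{x}:I_i},
\end{equation*}
while the right-hand side evaluates to
\begin{equation*}
	\bigoplus_{\substack{\hat{x} \in \mathcal{X} \\ \hat{x}:I=x}} \bigotimes_{i=1}^{n} \tensor[\pi(i)]{T}{\hat{x}:I_{\pi(i)}}.
\end{equation*}
First I would check that both expressions are even well-formed and have the same domain: the validity conditions from Section~\ref{sec:syntax} (matching axis lengths for repeated symbols, and $\sigma(I)$ contained in the union of the $\sigma(I_i)$) are invariant under permuting the argument/index-string pairs together, and $\domain(\tensor{T}{})$ depends only on $I$ and the axis lengths $d_s$, which are unchanged. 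The set $\mathcal{X}$ of global positions and the projection $\hat{x}:I$ likewise depend only on $S$ and on $I$, so the outer aggregation ranges over exactly the same index set on both sides.

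Next I would fix an arbitrary $\hat{x}$ in that index set and argue that the inner combinations agree term by term. The factor contributed by the $i$-th position on the right-hand side is $\tensor[\pi(i)]{T}{\hat{x}:I_{\pi(i)}}$, so as $i$ ranges over $[n]$ the multiset of factors $\{\tensor[\pi(i)]{T}{\hat{x}:I_{\pi(i)}} : i \in [n]\}$ is exactly the multiset $\{\tensor[j]{T}{\hat{x}:I_j} : j \in [n]\}$ appearing on the left, since $\pi$ is a bijection of $[n]$. Because $\otimes$ is commutative and associative (a semiring is associative by definition, and commutativity of the combination is assumed in the paragraph on the value set $R$), a finite product in $R$ is invariant under reordering its factors; hence the two inner products are equal for every $\hat{x}$. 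Applying $\oplus$ over the common index set then gives equality of the two aggregated sums, i.e.\ $\tensor{T}{x}$ is the same for every $x$, which is the claim. One can phrase the reordering step cleanly by induction on $n$, using a single adjacent transposition as the base manipulation, or simply invoke the standard fact that any permutation is a composition of transpositions.

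I do not expect a genuine obstacle here; the proof is essentially bookkeeping. The one subtlety worth spelling out is that a `factor' in \emph{einsum} is the pair (index string, tensor), so $\pi$ must act simultaneously on the list of index strings and on the list of arguments — permuting only one of the two lists would in general change the semantics. Making explicit that $\hat{x}:I_{\pi(i)}$ is the projection used with $\tensor[\pi(i)]{T}{}$, and that this pairing is preserved, is the part that requires care; everything else follows from commutativity of $\otimes$ in the semiring.
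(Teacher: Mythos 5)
Your proposal is correct and follows essentially the same route as the paper's proof: unfold both sides via the semantics definition, note that the domain and the set of global positions are unchanged, and for each fixed $\hat{x}$ invoke commutativity (and associativity) of $\otimes$ to reorder the factors of the inner product. The extra care you take with well-formedness and with keeping the pairing of index strings and tensors intact is sound but does not change the argument.
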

\begin{proof}
	Let $\mathcal{X}$ be the set of global positions. Because $I$ remains unchanged, both tensors have the same domain.
	We show for every position $x$ in that domain:
	\begin{align*}
		 & \einsum{I_1, \ldots, I_n}{I}{\tensor[1]{T}{}, \ldots, \tensor[n]{T}{}}(x)                                                   \\
		 & = \bigoplus\limits_{\substack{\hat{x} \in \mathcal{X}                                                                       \\ \hat{x}:I=x}} \bigotimes\limits_{i = 1}^{n} \tensor[i]{T}{\hat{x}:I_i}
		 &                                                                                       & (\textit{einsum})                   \\
		 & = \bigoplus\limits_{\substack{\hat{x} \in \mathcal{X}                                                                       \\ \hat{x}:I=x}} \bigotimes\limits_{i = 1}^{n} \tensor[\pi(i)]{T}{\hat{x}:I_{\pi(i)}}
		 &                                                                                       & (\text{commutativity of }\otimes\,) \\
		 & = \einsum{\pi(I_1, \ldots, I_n)}{I}{\pi(\tensor[1]{T}{}, \ldots, \tensor[n]{T}{})}(x)
		 &                                                                                       & (\textit{einsum})                   \\
	\end{align*}
	In other words, \emph{einsum} is commutative because the combination operation $\otimes$ is commutative.
\end{proof}

\begin{CommentBox}{Non-commutativity in linear algebra}
	Matrix-matrix multiplication in linear algebra is not commutative.
	That is, in general, $A\cdot B\neq B\cdot A$.
	This is not in contradiction to the commutativity of \emph{einsum}, because the first matrix-matrix product is represented by the \emph{einsum} expression $\einsum{ij,jk}{ik}{A,B}$, whereas the second product is represented by $\einsum{ij, jk}{ik}{B,A}$.
	These two \emph{einsum} expressions are not generally semantically equivalent, and in fact aggregate over different axes of $A$ and $B$.
\end{CommentBox}

\subsection{Associativity}
\label{sec:properties:associativity}

For a matrix multiplication over three matrices $A,B,$ and $C$, associativity means that $(AB)\cdot C = A\cdot (BC)$.
In other words, we can first compute the matrix-matrix product $A\cdot B =: D$ and then $D\cdot C$, or we can compute $B\cdot C =: E$ first and then $A\cdot E$, without changing the result.
For an \emph{einsum} expression $$\einsum{I_1, I_2, I_3}{I}{\tensor[1]{T}{}, \tensor[2]{T}{}, \tensor[3]{T}{}},$$ the same concept applies,
meaning we can evaluate an intermediate computation over $\tensor[1]{T}{}$ and $\tensor[2]{T}{}$ first and then combine it with $\tensor[3]{T}{}$,
or we can evaluate an intermediate computation over $\tensor[2]{T}{}$ and $\tensor[3]{T}{}$ first and then combine it with $\tensor[1]{T}{}$.
More specifically:

\begin{theorem}[Associativity of \emph{einsum}]
	\label{thm:associativity}
	Given three tensors $\tensor[1]{T}{}, \tensor[2]{T}{}, \tensor[3]{T}{}$ with corresponding index strings $I_1, I_2, I_3$, and result index strings $I, I_4, I_5$ such that $\sigma(I) = \sigma(I_4) \cup \sigma(I_5)$, the following equalities hold:
	\begin{align*}
		\einsum{I_1, I_2, I_3}{I}{\tensor[1]{T}{}, \tensor[2]{T}{}, \tensor[3]{T}{}}
		 & = \einsum{I_4, I_3}{I}{\einsum{I_1, I_2}{I_4}{\tensor[1]{T}{}, \tensor[2]{T}{}}, \tensor[3]{T}{}}  \\
		 & = \einsum{I_1, I_5}{I}{\tensor[1]{T}{}, \einsum{I_2, I_3}{I_5}{\tensor[2]{T}{}, \tensor[3]{T}{}}}.
	\end{align*}
\end{theorem}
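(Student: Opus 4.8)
The plan is to unfold both sides of each equality using the definition of the \emph{einsum} semantics, then use the semiring axioms to merge the nested aggregations into a single aggregation over \emph{pairs} of global positions, and finally identify such a pair with a single global position over the combined index-symbol set by means of a value-preserving bijection.

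Fix an output position $x$; since the output string $I$ is the same on both sides, both sides of the first equality have domain $[d_{s_1}] \times \sdots \times [d_{s_o}]$, so it suffices to compare entries at $x$. For the first equality, set $U := \#(I_1, I_2 \rightarrow I_4;\, T_1, T_2)$, and let $\mathcal{X}_{12}$, $\mathcal{X}_{43}$, and $\mathcal{X}$ denote the sets of global positions over $\sigma(I_1)\cup\sigma(I_2)$, $\sigma(I_4)\cup\sigma(I_3)$, and $\sigma(I_1)\cup\sigma(I_2)\cup\sigma(I_3)$, respectively. Unfolding the outer \emph{einsum} gives $\#(I_4, I_3 \rightarrow I;\, U, T_3)(x) = \bigoplus_{\hat z} U(\hat z : I_4) \otimes T_3(\hat z : I_3)$, the aggregation ranging over $\hat z \in \mathcal{X}_{43}$ with $\hat z : I = x$, and unfolding $U$ gives $U(\hat z : I_4) = \bigoplus_{\hat y} T_1(\hat y : I_1) \otimes T_2(\hat y : I_2)$, ranging over $\hat y \in \mathcal{X}_{12}$ with $\hat y : I_4 = \hat z : I_4$. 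Substituting, applying distributivity of $\otimes$ over $\oplus$ (every aggregation here is finite, so this is unproblematic), and then commutativity and associativity of $\otimes$, the right-hand side becomes a single aggregation of $T_1(\hat y : I_1) \otimes T_2(\hat y : I_2) \otimes T_3(\hat z : I_3)$ over all pairs $(\hat z, \hat y) \in \mathcal{X}_{43} \times \mathcal{X}_{12}$ with $\hat z : I = x$ and $\hat y : I_4 = \hat z : I_4$.

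The heart of the proof is to show that $\Phi : \hat x \mapsto \big(\hat x|_{\sigma(I_4)\cup\sigma(I_3)},\; \hat x|_{\sigma(I_1)\cup\sigma(I_2)}\big)$ is a bijection from $\{\hat x \in \mathcal{X} : \hat x : I = x\}$ onto that set of admissible pairs, and that it preserves the summand. Preservation is immediate, since both components of $\Phi(\hat x)$ are restrictions of $\hat x$, so the projections onto $I_1, I_2, I_3$ are unchanged; well-definedness of $\Phi$ and admissibility of its image use only $\sigma(I_4) \subseteq \sigma(I_1)\cup\sigma(I_2)$ and $\sigma(I) \subseteq \sigma(I_4)\cup\sigma(I_3)$, which hold because the sub-expressions are well-formed. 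Injectivity is clear, as $\hat x$ is recovered coordinatewise from the two restrictions, which together cover all of $\sigma(I_1)\cup\sigma(I_2)\cup\sigma(I_3)$. Surjectivity is the delicate point: given an admissible pair $(\hat z, \hat y)$, one must glue it into a single $\hat x \in \mathcal{X}$ by setting $\hat x_s := \hat y_s$ for $s \in \sigma(I_1)\cup\sigma(I_2)$ and $\hat x_s := \hat z_s$ for $s \in \sigma(I_3)$; this is consistent precisely because every index symbol shared by $\{I_1, I_2\}$ and $I_3$ (and, more generally, every index symbol from $I_1$ or $I_2$ that must still be distinguished by $I_3$ or by the output $I$) already occurs in $I_4$, on which $\hat y$ and $\hat z$ agree by admissibility --- this is exactly what the side condition relating $I$, $I_4$, and $I_5$, together with the well-formedness conditions, is there to guarantee. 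Once $\Phi$ is in hand, the two aggregations coincide by commutativity and associativity of $\oplus$. The second equality follows verbatim after interchanging the roles of $T_1$ and $T_3$ (and of $I_1$ with $I_3$, and of $I_4$ with $I_5$).

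The step I expect to be the main obstacle is the surjectivity of $\Phi$ --- verifying that splitting a ternary \emph{einsum} into two binary ones loses no index information, which is the whole reason the side condition on $I_4$ (resp.\ $I_5$) is imposed --- together with the fiddly bookkeeping of pushing each index-string projection through the restriction maps. The semiring manipulations (distributing to flatten the nested sums, and reassociating and commuting the finite aggregations) are routine, but have to be spelled out, as they are the only place where the semiring axioms are actually used.
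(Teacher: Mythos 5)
Your overall route is the same one the paper takes, just inlined rather than factored through a lemma: the paper proves associativity by appealing to the restricted (de)nesting rule of \hyperref[thm:nesting:simple]{Theorem~\ref*{thm:nesting:simple}}, whose proof is precisely your unfold--distribute--recombine computation together with a bijection between global positions of the flat expression and admissible pairs $(\hat z,\hat y)$; your map $\Phi$ is the paper's step $(\ast)$. The semiring manipulations, the injectivity, and the summand-preservation parts of your argument are fine.

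The gap sits exactly where you predicted it, and it cannot be closed as you describe. The consistency of the gluing in the surjectivity step needs the hypothesis of \hyperref[thm:nesting:simple]{Theorem~\ref*{thm:nesting:simple}}: every index symbol common to the inner expression ($I_1,I_2$) and the outer one ($I_3$ and $I$) must already occur in $I_4$, i.e.\ $(\sigma(I_1)\cup\sigma(I_2))\cap(\sigma(I_3)\cup\sigma(I))\subseteq\sigma(I_4)$. You assert that this ``is exactly what the side condition relating $I$, $I_4$, and $I_5$, together with the well-formedness conditions, is there to guarantee,'' but it does not follow from $\sigma(I)=\sigma(I_4)\cup\sigma(I_5)$ plus well-formedness. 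Take $I_1=I_3=(s)$ with $T_2$ a scalar and $I_2$, $I_4$, $I_5$, $I$ all equal to the empty string $\lambda$: the side condition holds vacuously, yet the flat expression is the inner product $\bigoplus_{\hat x_s} T_1(\hat x_s)\otimes T_3(\hat x_s)$, while your recombined right-hand side for the first grouping is $\bigl(\bigoplus_{\hat y_s} T_1(\hat y_s)\bigr)\otimes\bigl(\bigoplus_{\hat z_s} T_3(\hat z_s)\bigr)$, because the admissibility constraint $\hat y:I_4=\hat z:I_4$ is empty and the two copies of $s$ decouple --- so $\Phi$ is injective but not surjective onto the admissible pairs. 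Hence either the disjointness condition of \hyperref[thm:nesting:simple]{Theorem~\ref*{thm:nesting:simple}} must be adopted as an explicit hypothesis (the paper itself only asserts, in the comment box after that theorem, that it is satisfied), or you must actually derive it from the hypotheses you use; as written, the one step you flag as the heart of the proof is claimed rather than proved, and the claim is false in general.
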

The proof follows immediately from the more fundamental nesting and denesting operations in \hyperref[sec:nesting:simple]{Section~\ref*{sec:nesting:simple}}.

\subsection{Distributivity}

Applying distributivity to elementwise aggregations of \emph{einsum} expressions can improve the computational efficiency, as can be seen in the following example from matrix algebra:
Given matrices $A,B,$ and $C \in \R^{n \times n}$, we have the two equivalent expressions
$$AB + AC =  A(B + C),$$
where the use of distributivity makes computing a second matrix product unnecessary.
Here, $+$ denotes the elementwise addition of matrices, and all products refer to the ordinary matrix-matrix multiplication.
In \emph{einsum} notation over the standard arithmetic semiring, the expressions read as
\begin{align*}
	AB + AC  & = \einsum{ik, kj}{ij}{A, B} + \einsum{ik, kj}{ij}{A, C} \text{\quad and} \\
	A(B + C) & = \einsum{ik, kj}{ij}{A, B + C}.
\end{align*}

Therefore, we next prove in general that \emph{einsum} is distributive with regard to the elementwise aggregation operation.

\begin{theorem}[Distributivity of \emph{einsum} over aggregations]
	\label{thm:distributivity}

	For each $i \in [n]$, let $T_i$ be an $o_i$-th order tensor with index string $I_i \in S^{o_i}$,
	and let $P$ and $Q$ be $o_1$-th order tensors such that $T_1 = P \oplus Q$.
	Given a result index string $I$, then $S \oplus T = U$, where
	\begin{align*}
		S & = \einsum[\#]{I_1,\ldots,I_n}{I}{P, T_2, \ldots, T_n},             \\
		T & = \einsum[\#]{I_1,\ldots,I_n}{I}{Q, T_2, \ldots, T_n}, \text{ and} \\
		U & = \einsum[\#]{I_1,\ldots,I_n}{I}{P \oplus Q, T_2, \ldots, T_n}.
	\end{align*}
\end{theorem}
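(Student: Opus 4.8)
The plan is to prove the claimed equality $S \oplus T = U$ by showing that the two sides agree at every position $x$ in their common domain. First I would check that all three tensors $S$, $T$, and $U$ have the same domain: since they share the same output index string $I$, and since $P$, $Q$, and $P\oplus Q$ all have the same axis lengths (the domain of $P\oplus Q$ equals that of $P$ and of $Q$ by the definition of elementwise aggregation), the axis lengths assigned to each symbol of $I$ are identical in all three cases. Hence $\domain(S) = \domain(T) = \domain(U)$, and also $\domain(S\oplus T)$ equals this common domain.

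Next I would fix a position $x$ in that common domain and expand both sides using the definition of the semantics of \emph{einsum}. Writing $\mathcal{X}$ for the set of global positions, the left-hand side becomes
\begin{align*}
	(S \oplus T)(x)
	 & = S(x) \oplus T(x) \\
	 & = \Bigl(\bigoplus_{\substack{\hat{x}\in\mathcal{X}\\ \hat{x}:I=x}} P(\hat{x}:I_1) \otimes \bigotimes_{i=2}^n T_i(\hat{x}:I_i)\Bigr)
	       \oplus
	     \Bigl(\bigoplus_{\substack{\hat{x}\in\mathcal{X}\\ \hat{x}:I=x}} Q(\hat{x}:I_1) \otimes \bigotimes_{i=2}^n T_i(\hat{x}:I_i)\Bigr).
\end{align*}
The key step is then to merge these two aggregations over the same index set $\{\hat{x}\in\mathcal{X} : \hat{x}:I=x\}$ into one, using associativity and commutativity of $\oplus$, and inside each summand to apply the distributivity of $\otimes$ over $\oplus$ in the semiring $R$: for a fixed $\hat{x}$,
\[
	P(\hat{x}:I_1) \otimes \bigotimes_{i=2}^n T_i(\hat{x}:I_i)
	\;\oplus\;
	Q(\hat{x}:I_1) \otimes \bigotimes_{i=2}^n T_i(\hat{x}:I_i)
	=
	\bigl(P(\hat{x}:I_1) \oplus Q(\hat{x}:I_1)\bigr) \otimes \bigotimes_{i=2}^n T_i(\hat{x}:I_i).
\]
Since $(P\oplus Q)(\hat{x}:I_1) = P(\hat{x}:I_1) \oplus Q(\hat{x}:I_1)$ by the definition of elementwise aggregation, the right-hand side of the merged expression is exactly the summand appearing in the semantic expansion of $U(x)$, so $(S\oplus T)(x) = U(x)$.

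The main obstacle — such as it is — is purely bookkeeping: one must be careful that the ring-level distributive law $a\otimes c \oplus b\otimes c = (a\oplus b)\otimes c$ is applied at the level of individual global positions \emph{before} the outer aggregation is reassembled, and that reindexing the outer $\oplus$ uses only the finiteness of $\mathcal{X}$ together with the commutative-monoid structure of $(R,\oplus)$. No genuinely hard idea is needed; the subtlety is just making the interchange of the two nested aggregations rigorous and invoking exactly the semiring axioms (associativity and commutativity of $\oplus$, distributivity of $\otimes$ over $\oplus$) that justify each manipulation.
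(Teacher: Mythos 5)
Your proposal is correct and follows essentially the same argument as the paper's proof, merely run in the opposite direction: the paper starts from $U(x)$, applies semiring distributivity pointwise at each global position, and then splits the aggregation via commutativity of $\oplus$ to obtain $S(x)\oplus T(x)$, while you start from $S(x)\oplus T(x)$ and merge. The domain check and the invocation of the elementwise definition of $P\oplus Q$ match the paper as well.
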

\begin{proof}
	The expressions for $S, T,$ and $U$ use the same index symbols and thus have the same set $\mathcal{X}$ of global positions.
	The shared result string $I$ implies that $\domain(S \oplus T) = \domain(U)$.
	Thus, for every position $x\in\domain(U)$:
	\begin{align*}
		U(x)
		 & = \einsum[\#]{I_1,\ldots,I_n}{I}{P \oplus Q, T_2, \ldots, T_n}                                                    \\
		 & = \bigoplus\limits_{\substack{\hat{x} \in \mathcal{X}                                                             \\
				\hat{x}:I=x}} (P+Q)(\hat{x}:I_1) \otimes \bigotimes\limits_{i = 2}^{n} T_i(\hat{x}:I_i)
		 &                                                                & (\text{\emph{einsum}})                           \\
		 & = \bigoplus\limits_{\substack{\hat{x} \in \mathcal{X}                                                             \\
				\hat{x}:I=x}} \left( P(\hat{x}:I_1) \otimes \bigotimes\limits_{i = 2}^{n} T_i(\hat{x}:I_i)
		\oplus Q(\hat{x}:I_1) \otimes \bigotimes\limits_{i = 2}^{n} T_i(\hat{x}:I_i) \right)
		 &                                                                & (\text{distrib.\ of }\oplus\text{ and } \otimes) \\
		 & = \bigoplus\limits_{\substack{\hat{x} \in \mathcal{X}                                                             \\ \hat{x}:I=x}}
		\left(P(\hat{x}:I_1) \otimes \bigotimes\limits_{i = 2}^{n} T_i(\hat{x}:I_i)\right)
		\oplus
		\bigoplus\limits_{\substack{\hat{x} \in \mathcal{X}                                                                  \\ \hat{x}:I=x}}
		\left(Q(\hat{x}:I_1) \otimes \bigotimes\limits_{i = 2}^{n} T_i(\hat{x}:I_i)\right)
		 &                                                                & (\text{commutativity of }\oplus)                 \\
		 & = S(x) \oplus T(x)
		 &                                                                & (\text{\emph{einsum}})
	\end{align*}
\end{proof}

\section{Nesting and denesting}
\label{sec:nesting}

A key step when optimizing \emph{einsum} expressions for efficient evaluation is the computation of a good \emph{contraction path}.
A contraction path decomposes a single \emph{einsum} expression over $n$ tensors into $n - 1$ binary \emph{einsum} expressions, meaning they only have two input tensors each.
For example, the matrix-matrix-vector product $A\cdot B\cdot v$ from the introduction can be written as the following \emph{einsum} expression:

$$\einsum{ij, jk, k}{i}{A, B, v}$$

Associativity allows us to evaluate this expression in two different ways.
One possibility is to compute the matrix-matrix product $A\cdot B$ first, which corresponds to the nested \emph{einsum} expression

$$\einsum{ik, k}{i}{\einsum{ij, jk}{ik}{A, B}, v}.$$

The other possibility is to compute the matrix-vector product $B\cdot v$ first, which results in a different nested \emph{einsum} expression,

$$\einsum{ij, j}{i}{A, \einsum{jk, k}{j}{B, v}}.$$

Of these two possibilities, the second is computationally more efficient, because it decomposes the operation into two matrix-vector products, completely avoiding the matrix-matrix product.

In this example, we used the associativity of matrix products in linear algebra.
Below, we will prove that all \emph{einsum} expressions can similarly be decomposed into a series of multiple smaller expressions.

\begin{CommentBox}{Hardness of optimizing contraction paths}
	Generally, the choice of contraction path has a significant impact on the computational complexity of the expression.
	Optimizing the contraction path is an NP-hard problem \cite{Xu2023} and beyond the scope of this work.
\end{CommentBox}

\subsection{Restricted nesting and denesting}
\label{sec:nesting:simple}

In the following, we establish the soundness of the transformations used in contraction paths, which we call \textbf{restricted nesting and denesting}.
This rule is restricted in the sense that there are nested \emph{einsum} expressions that cannot be denested into one flat expression using only this rule.
Nevertheless, the rule covers all transformations required to specify contraction paths, i.e.\ to turn a single $n$-ary \emph{einsum} expression into a hierarchy of nested binary \emph{einsum} expressions.

\begin{theorem}[Restricted nesting and denesting of \emph{einsum} expressions]
	\label{thm:nesting:simple}
	For each $i \in [m + n]$, let $T_i$ be an $o_i$-th order tensor with an index string $I_i$ of matching length, let $I_u, I_v$ also be index strings, and let
	$$U = \einsum[\#]{I_1,\ldots,I_m}{I_u}{T_1,\ldots,T_m}$$
	and
	$$V = \einsum[\#]{I_u, I_{m+1}, \ldots , I_{m+n}}{I_v}{U, T_{m+1}, \ldots, T_{m+n}}.$$
	If the inner and the outer \emph{einsum} expression do not have any index symbols in common except for those that appear in $I_u$, then
	$V = \einsum[\#]{I_1, \ldots, I_{m + n}}{I_v}{T_1, \ldots, T_{m + n}}$.
\end{theorem}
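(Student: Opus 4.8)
The plan is to unfold both \emph{einsum} definitions, collapse the resulting pair of nested aggregations into a single one via distributivity, and then re-index that aggregation along a bijection of global positions.

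First I would name the three relevant symbol sets: $S_{\mathrm{in}} := \sigma(I_u) \cup \bigcup_{i=1}^{m}\sigma(I_i)$, the symbols of the inner expression $U$; $S_{\mathrm{out}} := \sigma(I_u) \cup \sigma(I_v) \cup \bigcup_{i=m+1}^{m+n}\sigma(I_i)$, the symbols of the outer expression $V$; and $S_{\mathrm{flat}} := \sigma(I_v) \cup \bigcup_{i=1}^{m+n}\sigma(I_i)$, the symbols of the candidate flat expression; with associated global-position sets $\mathcal{X}_\bullet := \prod_{s \in S_\bullet}[d_s]$. Using only the validity conditions of \hyperref[sec:syntax]{Section~\ref*{sec:syntax}} --- in particular $\sigma(I_u) \subseteq \bigcup_{i \le m}\sigma(I_i)$ and $\sigma(I_v) \subseteq \sigma(I_u) \cup \bigcup_{i > m}\sigma(I_i)$ --- I would check that the flat expression is itself valid and that $S_{\mathrm{flat}} = S_{\mathrm{in}} \cup S_{\mathrm{out}}$. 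The hypothesis of the theorem enters at exactly one point: ``no common index symbols except those appearing in $I_u$'' says precisely $S_{\mathrm{in}} \cap S_{\mathrm{out}} = \sigma(I_u)$. Gluing then yields a bijection $\Phi\colon \{(\hat y,\hat z) \in \mathcal{X}_{\mathrm{in}} \times \mathcal{X}_{\mathrm{out}} : \hat y \text{ and } \hat z \text{ agree on } \sigma(I_u)\} \to \mathcal{X}_{\mathrm{flat}}$ sending a compatible pair to the global position $\hat x$ with $\hat x_s = \hat y_s$ for $s \in S_{\mathrm{in}}$ and $\hat x_s = \hat z_s$ for $s \in S_{\mathrm{out}}$ --- well defined because the two halves agree on the overlap $S_{\mathrm{in}} \cap S_{\mathrm{out}} = \sigma(I_u)$, with inverse given by restricting the tuple to $S_{\mathrm{in}}$ and to $S_{\mathrm{out}}$. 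I would also record the elementary fact that $\hat y:I_u = \hat z:I_u$ holds iff $\hat y$ and $\hat z$ agree on every symbol of $\sigma(I_u)$, since each such symbol occurs somewhere in $I_u$.

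Then, fixing a position $x$ in the common domain $\prod_j [d_{(I_v)_j}]$ of $V$ and of the flat expression, I would expand $V(x)$ by the \emph{einsum} definition, substitute the \emph{einsum} definition of $U$ for the factor $U(\hat z:I_u)$, and arrive at a nested aggregation of the form $\bigoplus_{\hat z}\bigl(\bigoplus_{\hat y}\bigotimes_{i=1}^{m}T_i(\hat y:I_i)\bigr) \otimes \bigotimes_{i=m+1}^{m+n}T_i(\hat z:I_i)$, where $\hat z$ runs over $\mathcal{X}_{\mathrm{out}}$ with $\hat z:I_v = x$ and $\hat y$ runs over $\mathcal{X}_{\mathrm{in}}$ with $\hat y:I_u = \hat z:I_u$. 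Pulling the $\hat z$-factor into the inner aggregation by distributivity of $\otimes$ over $\oplus$ and fusing the two aggregations into a single one over compatible pairs $(\hat y,\hat z)$, I would transport the whole sum along $\Phi$. The key observations making this work are that, for $\hat x = \Phi(\hat y,\hat z)$, one has $\hat x:I_i = \hat y:I_i$ for $i \le m$ (because $\sigma(I_i) \subseteq S_{\mathrm{in}}$), $\hat x:I_i = \hat z:I_i$ for $i > m$, and $\hat x:I_v = \hat z:I_v$; so the side condition $\hat z:I_v = x$ turns into $\hat x:I_v = x$, and by associativity of $\otimes$ the summand becomes $\bigotimes_{i=1}^{m+n}T_i(\hat x:I_i)$. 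This is exactly $\#(I_1,\ldots,I_{m+n} \to I_v;\, T_1,\ldots,T_{m+n})(x)$, completing the proof.

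I expect the only real obstacle to be the symbol-set bookkeeping --- establishing $S_{\mathrm{flat}} = (S_{\mathrm{in}} \setminus \sigma(I_u)) \sqcup \sigma(I_u) \sqcup (S_{\mathrm{out}} \setminus \sigma(I_u))$ and hence that $\Phi$ is a bijection --- which is also exactly where the ``restricted'' hypothesis is indispensable: if the inner expression had a private contraction symbol colliding with a private symbol of the outer expression, the glued global position would conflate two logically distinct axes and the identity would fail. Once the symbol algebra is fixed, the analytic content is just a single application each of distributivity and associativity in the commutative semiring, in the same spirit as the proofs of commutativity and distributivity above.
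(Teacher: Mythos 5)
Your proposal is correct and follows essentially the same route as the paper's own proof: unfold both \emph{einsum} definitions, apply semiring distributivity to fuse the nested aggregations, and re-index the resulting double sum via the gluing bijection between compatible pairs $(\hat{x}_u,\hat{x}_v)$ and global positions over $S_U\cup S_V$ (the paper's step $(\ast)$, which you make slightly more explicit as the map $\Phi$). No substantive differences.
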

\begin{proof}
	Let $\mathcal{X}_U, \mathcal{X}_V$ be the global position sets over the index symbol sets $S_U,S_V$ for the expressions $U$ and $V$, respectively.
	Let $S := S_U \cup S_V$ be the set of index symbols occurring in either expression, and let $\mathcal{X} := \prod_{s \in S} [d_s]$ be the global position set over these index symbols.
	That $U$ and $V$ have no other symbols in common than those in $I_u$ can be formally stated as:
	$$\Big(\bigcup_{i=1}^m \sigma(I_i)\Big) \cap \Big( \bigcup_{j=m+1}^{m+n} \sigma(I_j)\Big) \subseteq \sigma(I_u).$$
	Then for all positions $x\in\domain(V)$:
	\begin{align*}
		V(x)
		                                                                                                                & = \einsum[\#]{I_u, I_{m+1}, \ldots , I_{m+n}}{I_v}{U, T_{m+1}, \ldots, T_{m+n}}(x)                                                           \\
		                                                                                                                & = \bigoplus\limits_{\substack{\hat{x}_v \in \mathcal{X}_V                                                                                    \\ \hat{x}_v:I_v=x}}
		U(\hat{x}_v:I_u) \otimes \bigotimes\limits_{i=m+1}^{m+n} T_i(\hat{x}_v:I_i)                                     &                                                                                    & (\text{\emph{einsum} for $V$})                          \\
		                                                                                                                & = \bigoplus\limits_{\substack{\hat{x}_v \in \mathcal{X}_V                                                                                    \\ \hat{x}_v:I_v=x}}
		\bigg(\bigoplus\limits_{\substack{\hat{x}_u \in \mathcal{X}_U                                                                                                                                                                                                  \\ \hat{x}_u:I_u=\hat{x}_v:I_u}}
		\bigotimes\limits_{j=1}^{m} T_j(\hat{x}_u:I_j)\bigg) \otimes \bigotimes\limits_{i=m+1}^{m+n} T_i(\hat{x}_v:I_i) &                                                                                    & (\text{\emph{einsum} for $U$})                          \\
		                                                                                                                & = \bigoplus\limits_{\substack{\hat{x}_v \in \mathcal{X}_V                                                                                    \\ \hat{x}_v:I_v=x}}
		\bigoplus\limits_{\substack{\hat{x}_u \in \mathcal{X}_U                                                                                                                                                                                                        \\ \hat{x}_u:I_u=\hat{x}_v:I_u}} \bigg( \bigotimes\limits_{j=1}^{m} T_j(\hat{x}_u:I_j) \otimes \bigotimes\limits_{i=m+1}^{m+n} T_i(\hat{x}_v:I_i)\bigg) && (\text{semiring distributivity})\\
		                                                                                                                & = \bigoplus\limits_{\substack{\hat{x} \in \mathcal{X}                                                                                        \\ \hat{x}:I_v=x}} \bigotimes\limits_{i=1}^{m+n} T_i(\hat{x}:I_i) && (\ast\text{ see below})\\
		                                                                                                                & = \einsum[\#]{I_1, \ldots, I_{m + n}}{I_v}{T_{1}, \ldots, T_{m + n}}(x)            &                                & (\text{\emph{einsum}}) \\
	\end{align*}
	$(\ast)$ Observe that for every pair of global positions $\hat{x}_u \in \mathcal{X}_U$ and $\hat{x}_v \in \mathcal{X}_V$ with $\hat{x}_u:I_u = \hat{x}_v:I_u$,
	there exists a unique $\hat{x} \in \mathcal{X}$ that assigns the same value to each index symbol in $S=S_U \cup S_V$ as $\hat{x}_u$ and $\hat{x}_v$, because the shared index symbols $S_U \cap S_V = \sigma(I_u)$ are each assigned the same value by both $\hat{x}_u$ and $\hat{x}_v$.
	The combinations of all possible pairs $\hat{x}_u, \hat{x}_v$ exactly make up all possible global positions $\hat{x} \in \mathcal{X}$.
	Thus, both sums are identical.
\end{proof}

\begin{CommentBox}{Associativity}
	Since the associativity rules satisfy the condition of \hyperref[sec:nesting:simple]{Section~\ref*{sec:nesting:simple}},
	it follows directly by denesting
	$$
		\einsum{I_4, I_3}{I}{\einsum{I_1, I_2}{I_4}{\tensor[1]{T}{}, \tensor[2]{T}{}}, \tensor[3]{T}{}} \text{\quad and\quad}
		\einsum{I_1, I_5}{I}{\tensor[1]{T}{}, \einsum{I_2, I_3}{I_5}{\tensor[2]{T}{}, \tensor[3]{T}{}}}
	$$
	into
	$\einsum{I_1, I_2, I_3}{I}{\tensor[1]{T}{}, \tensor[2]{T}{}, \tensor[3]{T}{}}$.
\end{CommentBox}

\subsection{General nesting and denesting}
\label{sec:nesting:general}

Restricted nesting and denesting as defined in \hyperref[thm:nesting:simple]{Theorem~\ref*{thm:nesting:simple}} is always sufficient when starting with a completely flat expression.
In practice, however, one may encounter nested expressions that do not conform to the restrictions in \hyperref[thm:nesting:simple]{Theorem~\ref*{thm:nesting:simple}} regarding the index symbols in the inner and outer expression.
For example, a matrix-matrix multiplication $A \cdot \text{diag}(v)$ of a matrix $A \in \R^n$ and a diagonal matrix $\text{diag}(v) \in \R^{n \times n}$ can be expressed by a nested \emph{einsum} expression
$$\einsum{ik,kj}{ij}{A, \text{diag}(v)} = \einsum{ik,kj}{ij}{A, \einsum{i}{ii}{v}}.$$
This expression cannot be denested with the restricted rule, because the output string $ii$ of the inner expression does not match the input string $ik$ of the outer expression, which is required by  \hyperref[thm:nesting:simple]{Theorem~\ref*{thm:nesting:simple}}. However, a semantically equivalent denested expression
$$\einsum{ik, k}{ik}{A, v}$$
does exist and is even computationally less expensive, because it avoids the creation of the diagonal matrix and the evaluation of a matrix-matrix product.

In general, the two restrictions of \hyperref[thm:nesting:simple]{Theorem~\ref*{thm:nesting:simple}} can lead to two obstructions when denesting.
The first issue is index symbol collisions, which occur where the outer and the inner expression use \emph{the same} index symbol for \emph{different} purposes.
For example, in the expression
$$\einsum{ijk, }{}{U, \einsum{ijk}{}{V}},$$
the same index symbols $i,j$ and $k$ are used for both of the tensors $U$ and $V$.
To retain the same semantics, the denested expression requires a separate set of symbols for each tensor:
$$\einsum{ijk, abc}{}{U, V}$$
This problem occurs whenever the outer and the inner expression have an index symbol in common that does not appear in the shared index string (called $I_u$ in \hyperref[thm:nesting:simple]{Theorem~\ref*{thm:nesting:simple}}),
but it can always be resolved by renaming index symbols in either expression until all collisions are eliminated.
Because the scope of a given index symbol is only a single \emph{einsum} expression, renaming every occurrence within that expression with a previously unused symbol leaves the result unchanged.

The second issue is that \hyperref[thm:nesting:simple]{Theorem~\ref*{thm:nesting:simple}} assumes that the result index string of the inner expression and the corresponding operand index string of the outer expression match exactly, which may not be the case.
If one or both index strings include duplicate symbols, this cannot be resolved by simply renaming index symbols, because renaming only works when introducing index symbols that are \emph{not already in use} within the expression.
This is, for instance, the case in the example
$$\einsum{ik,kj}{ij}{A, \text{diag}(v)} = \einsum{ik,kj}{ij}{A, \einsum{i}{ii}{v}}$$
from above.
To resolve this second issue, we replace a direct renaming scheme with the more general concept of a symbol map.

\begin{definition}[Index symbol map]
	An index symbol map for an index symbol set $S$ is a mapping $\nu:S \to \hat S$, where $\hat S$ is another index symbol set with $|\hat S| \leq |S|$.
	The extension of $\nu$ to entire index strings is denoted as $\nu^\ast$.
	That is, an index string $I=(s_1, \ldots, s_o)\in S^o$ of length $o$ is mapped to $\nu^\ast (I) = (\nu(s_1), \ldots, \nu(s_o))$.
\end{definition}

Not every symbol map can be used to denest a nested \emph{einsum} expression.
We derive constraints on the symbol map from the inner and outer expression and encode them in an \emph{index symbol graph}.

\begin{definition}[Index symbol graph]
	For a given nested \emph{einsum} expression
	$$\einsum{\hat{I}_u, I_{m+1}, \ldots , I_{m+n}}{I_v}{\einsum{I_1,\ldots,I_m}{I_u}{T_1,\ldots,T_m}, T_{m+1}, \ldots, T_{m+n}},$$
	where the $T_i, i \in [m + n]$ are tensors with index strings $I_i$, and
	$I_u = (a_1, \ldots, a_d), \hat{I}_u = (b_1, \ldots, b_d)$ and $I_v$ are index strings,
	the vertex set $V$ of the undirected index symbol graph is given by
	$$\{ u_1, \ldots, u_d\} \cup \{ v_1, \ldots, v_d\} \cup \{ x_1, \ldots, x_d\},$$
	and the edge set $E$ is derived from the nested expression as follows:
	\begin{align*}
		 & (u_i, u_j)\in E\: \text{ for all } i,j\in[d] \text{ such that }\: a_i = a_j,             \\
		 & (v_i, v_j)\in E\: \text{ for all } i,j\in[d] \text{ such that }\: b_i = b_j, \text{ and} \\
		 & (u_i, x_i),\, (v_i, x_i) \in E \text{ for all } i\in[d].
	\end{align*}
\end{definition}

The vertices in an index symbol graph represent index symbols, and edges connect vertices that must be mapped to the same index symbol.
We get a valid symbol map by mapping every symbol from the original index symbol set $S$ that is assigned to some vertex in a connected component of the graph to the new symbol in $\hat S$ for the component.
Note that if the nested expression has index duplications in neither the inner nor the outer expression, the graph has exactly one connected component for every label, and thus the symbol map amounts to a simple renaming of all index symbols.

\begin{CommentBox}{Denesting in practice}
	In practice, the index symbol graph can be built without the vertices $x_i, i\in [d]$.
	The symbol map could then explicitly be stated by identifying the graph components.
	We have included these vertices because we needed them in the correctness proof of \hyperref[thm:nesting:general]{Theorem~\ref*{thm:nesting:general}} that is stated below.
\end{CommentBox}

\begin{theorem}[General nesting and denesting]
	\label{thm:nesting:general}
	Let $V$ be a nested \emph{einsum} expression
	$$V = \einsum{\hat{I}_u, I_{m+1}, \ldots , I_{m+n}}{I_v}{\einsum{I_1,\ldots,I_m}{I_u}{T_1,\ldots,T_m}, T_{m+1}, \ldots, T_{m+n}},$$
	where the $T_i, i \in [m + n]$ are tensors with index strings $I_i$, and
	$I_u = (a_1, \ldots, a_d), \hat{I}_u = (b_1, \ldots, b_d)$ and $I_v$ are index strings.
	If $\nu$ is an index symbol map that respects the restrictions encoded in the index symbol graph constructed from $V$, then
	$$V = \einsum{\nu^\ast(I_1), \ldots, \nu^\ast(I_{m + n})}{\nu^\ast(I_v)}{T_1, \ldots, T_{m + n}}. $$
\end{theorem}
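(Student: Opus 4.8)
The plan is to unfold the semantics of both sides into a single semiring aggregation and then match the two aggregations term by term via an explicit bijection between their index sets. First I would rewrite the nested left-hand side $V$: writing $\mathcal{X}_{\mathrm{in}}$ and $\mathcal{X}_{\mathrm{out}}$ for the global position sets of the inner and the outer expression, the distributivity step used in the proof of Theorem~\ref{thm:nesting:simple} gives, for each $x \in \domain(V)$, $V(x) = \bigoplus \big( \bigotimes_{j=1}^{m} T_j(\hat{x}_u : I_j) \big) \otimes \big( \bigotimes_{i=m+1}^{m+n} T_i(\hat{x}_v : I_i) \big)$, where the aggregation ranges over all \emph{compatible pairs} $(\hat{x}_u, \hat{x}_v) \in \mathcal{X}_{\mathrm{in}} \times \mathcal{X}_{\mathrm{out}}$ with $\hat{x}_v : I_v = x$ and $\hat{x}_u : I_u = \hat{x}_v : \hat{I}_u$. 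For the denested right-hand side $W$, unfolding the \emph{einsum} definition gives $W(x) = \bigoplus_{\hat{x}\,:\,\hat{x}:\nu^\ast(I_v) = x} \bigotimes_{i=1}^{m+n} T_i(\hat{x} : \nu^\ast(I_i))$, and the identity $\hat{x} : \nu^\ast(I) = (\hat{x} \circ \nu) : I$ lets me pull each summation index back along $\nu$ to an assignment $\hat{x}\circ\nu$ on the original symbols.

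Before the main step I would dispose of index symbol collisions: because the scope of an index symbol is a single \emph{einsum} expression, renaming each occurrence of a symbol inside the inner or inside the outer expression by a fresh symbol leaves the value of $V$ and the identifications prescribed by the graph intact, so I may assume without loss of generality that the inner and outer expressions share no index symbols at all, their only link being the positional correspondence between $I_u = (a_1,\ldots,a_d)$ and $\hat{I}_u = (b_1,\ldots,b_d)$. Under this normalization, ``$\nu$ respects the index symbol graph'' means precisely that two symbols occurring in $\sigma(I_u)\cup\sigma(\hat{I}_u)$ have the same image under $\nu$ if and only if their vertices lie in one connected component, while every remaining symbol is mapped injectively to a fresh symbol; in particular $\nu(a_i) = \nu(b_i)$ for every $i\in[d]$ thanks to the edges $(u_i,x_i)$ and $(v_i,x_i)$. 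Validity of the nested expression $V$ forces $d_{a_i} = d_{b_i}$ for each $i$, and propagating these equalities along components shows that $\nu$ preserves axis lengths; hence $\domain(V) = \domain(W)$, so both sides have the same domain.

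The crux is the claim that $\hat{x} \mapsto \big( (\hat{x}\circ\nu)|_{S_{\mathrm{in}}},\ (\hat{x}\circ\nu)|_{S_{\mathrm{out}}} \big)$ is a bijection from the global positions of $W$ onto the set of compatible pairs. That the image consists of compatible pairs, and injectivity, are routine: restricting the pullback and using $\sigma(I_u)\subseteq S_{\mathrm{in}}$, $\sigma(\hat{I}_u)\subseteq S_{\mathrm{out}}$ together with $\nu(a_i) = \nu(b_i)$ yields $\hat{x}_u : I_u = \hat{x}_v : \hat{I}_u$. Surjectivity is where the graph does its work, and I expect it to be the main obstacle: given a compatible pair, I glue $\hat{x}_u$ and $\hat{x}_v$ into one assignment $\hat{y}$ on $S = S_{\mathrm{in}}\cup S_{\mathrm{out}}$ (well-defined since the two symbol sets are now disjoint) and must show $\hat{y}$ is constant on every $\nu$-fiber, so that it factors as $\hat{y} = \hat{x}\circ\nu$ for a unique $\hat{x} \in \mathcal{X}_W$. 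Here the edges $(u_i,u_j)$ for $a_i = a_j$ and $(v_i,v_j)$ for $b_i = b_j$ are respected automatically because $\hat{x}_u$ and $\hat{x}_v$ are genuine functions of the symbols, and the edges $(u_i,x_i),(v_i,x_i)$ are respected because compatibility says exactly $\hat{x}_u(a_i) = \hat{x}_v(b_i)$. The delicate point is that the equivalence relation generated by all these edges is neither too coarse -- which would force $\hat{y}$ to satisfy identifications it need not, so it would fail to factor through $\nu$ -- nor too fine -- which would leave some compatible pair not arising from any global position of $W$; checking this balance is precisely what the auxiliary vertices $x_i$ and the length-consistency of components are designed to make work.

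Finally I would transport the two aggregations across this bijection. The constraint $\hat{x} : \nu^\ast(I_v) = x$ becomes $\hat{x}_v : I_v = x$ since $\sigma(I_v)\subseteq S_{\mathrm{out}}$, and each factor satisfies $T_i(\hat{x}:\nu^\ast(I_i)) = T_i(\hat{x}_u : I_i)$ for $i\le m$ and $T_i(\hat{x}:\nu^\ast(I_i)) = T_i(\hat{x}_v : I_i)$ for $i>m$, using $\sigma(I_i)\subseteq S_{\mathrm{in}}$ respectively $\sigma(I_i)\subseteq S_{\mathrm{out}}$. Thus the aggregation for $W(x)$ and the distributed aggregation for $V(x)$ are term-by-term identical, giving $W(x) = V(x)$ for every $x \in \domain(V) = \domain(W)$, and hence $V = W$.
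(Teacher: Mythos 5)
Your proof is correct, but it takes a genuinely different route from the paper's. The paper proves the theorem by exhibiting the denested expression as the endpoint of a chain of three already-verified rewrites: $\delta$-splits that replace $I_u$ and $\hat{I}_u$ by a fresh string $X$ of pairwise distinct symbols (so that the inner output string and the outer operand string literally coincide), an application of the restricted denesting rule of Theorem~\ref{thm:nesting:simple}, and finally $\delta$-merges that eliminate the introduced unit matrices; the observation that this merging step implicitly computes the connected components of the index symbol graph is what ties the construction back to the statement. You instead argue directly at the level of semantics: you reuse only the distributivity half of the restricted-denesting computation to write $V(x)$ as an aggregation over compatible pairs $(\hat{x}_u,\hat{x}_v)$, and then establish a bijection between those pairs and the global positions of the denested expression by pulling back along $\nu$. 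Your surjectivity argument --- glue the pair into an assignment on the disjoint union of the symbol sets, check that it respects every edge of the graph (same-symbol edges trivially, the edges $(u_i,x_i),(v_i,x_i)$ by the compatibility condition $\hat{x}_u(a_i)=\hat{x}_v(b_i)$), and conclude that it is constant on $\nu$-fibers and hence factors uniquely through $\nu$ --- is exactly where the graph earns its keep, and it is sound provided one reads ``respects the graph'' the way you make explicit (equal images if and only if same connected component, injectivity elsewhere); the theorem would indeed be false under the weaker reading that only requires edges to be respected, so making this precise is a genuine contribution of your write-up. What the paper's route buys is reuse of previously proved lemmas and an algorithmic reading of the graph construction; what your route buys is a self-contained argument that makes the underlying bijection, and hence the exact role of the auxiliary vertices $x_i$ and of the length-consistency within components, explicit. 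The only weakness is cosmetic: the closing sentence of your surjectivity paragraph gestures at a ``too coarse / too fine'' balance as a remaining delicate point, when your preceding two sentences have in fact already settled it.
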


The theorem suggests a straightforward denesting algorithm in four steps:
(1) construct the index symbol graph,
(2) determine its connected components,
(3) derive the symbol map $\nu$ from the connected components, and
(4) construct the denested expression by using $\nu^\ast$ to rename index symbols.
The idea is illustrated on the following example that cannot be denested using only \hyperref[thm:nesting:simple]{Theorem~\ref*{thm:nesting:simple}}:
$$\vphantom{\overbrace{abbcde}^{\hat I_u}}\einsum[\#]{a,b,c,d,e,\smash{\overbrace{abbcde}^{\hat I_u}}}{bc}{v_1, v_2, v_3, v_4, v_5, \einsum[\#]{i,j,k,l}{\smash{\overbrace{iijkkl}^{I_u}}}{v_6, v_7, v_8, v_9}},$$
where the vectors $v_i \in \R^{d_{i}}, i \in \{1,\ldots,9\}$ have axis lengths $d_i$ that match as required by \emph{einsum}.
\begin{figure}[h]
	\centering
    \includegraphics{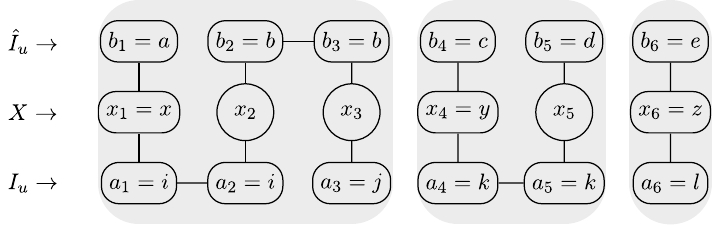}
	\caption{The index symbol graph for our example that cannot be denested directly by \hyperref[thm:nesting:simple]{Theorem~\ref*{thm:nesting:simple}}.}
	\label{fig:nested_expressions:example_index_graph}
\end{figure}
The index symbol graph for the example is shown in \hyperref[fig:nested_expressions:example_index_graph]{Figure~\ref*{fig:nested_expressions:example_index_graph}}.
From the connected components of the graph, we derive the new index symbols $x_1=x, x_4=y$, and $x_6=z$ for the denested expression, and the index symbol map $\nu$ that maps $a,b,i$, and $j$ to $x$, maps $c,d$, $k$ to $y$, and maps $e$ and $l$ to $z$.
Using this symbol map, we get the denested expression
$$\einsum[\#]{x,x,y,y,z,x,x,y,z}{xy}{v_1, v_2, v_3, v_4, v_5, v_6, v_7, v_8, v_9}.$$

To formally prove the correctness of the denesting algorithm and thus \hyperref[thm:nesting:general]{Theorem~\ref*{thm:nesting:general}}, we show that the same result expression can be obtained by sequentially applying three rewriting steps, each of which preserves semantic equivalence.
The three rewriting steps are:
(1) introduce delta tensors as additional operands into an \emph{einsum} expression to replace any one index symbol with a new index symbol,
(2) apply restricted denesting using \hyperref[thm:nesting:simple]{Theorem~\ref*{thm:nesting:simple}},
(3) merge the previously introduced index symbols in accordance with the connected components of the index symbol graph.

Next, we formally define the splitting and merging operations, prove that these operations keep the transformed expressions semantically equivalent, and finally prove that the three rewriting steps reach the same result as our renaming algorithm.

\begin{definition}[Delta split and delta merge]
	Given an \emph{einsum} expression
	$$\einsum{I_1, \ldots, I_n}{I}{\tensor[1]{T}{}, \ldots, \tensor[n]{T}{}}$$
	with corresponding index strings $I_i$ and operand tensors $T_i$ for all $i\in[n]$, let $a$ be an index symbol that appears in at least one of the index strings $I_i$ and let $b$ be a new index symbol that does not appear in any of the index strings.
	Further, let
	$$\einsum{ab, \hat{I}_1, \ldots, \hat{I}_n}{\hat{I}}{\delta_1, \tensor[1]{T}{}, \ldots, \tensor[n]{T}{}}$$
	be a second \emph{einsum} expression in which the index strings $\hat{I}_i$ and the index string $\hat{I}$ each result from the corresponding index strings $I_i$ and $I$
	by replacing all, some, or none of the occurrences of $a$ with $b$.
	We call the reshaping operation that introduces a delta tensor to split the index symbol $a$ into two separate symbols $a$ and $b$ a \textbf{delta split} (or $\delta$-split).
	The inverse operation, which removes the delta tensor and merges all occurrences of the index symbol $b$ by replacing them with $a$, is called a \textbf{delta merge} (or $\delta$-merge).
\end{definition}

\begin{CommentBox}{Delta switch}
	Because the unit matrix $\delta_1 = \mathds{1}$ is symmetric, and thus $\delta_1(ab)=\delta_1(ba)$ for all $a$ and $b$,
	a $\delta$-split can equivalently introduce, and a $\delta$-merge remove, the left index symbol of the delta tensor instead of the right.
\end{CommentBox}

\begin{theorem}
	Applying a $\delta$-split or $\delta$-merge
	results in a semantically equivalent \emph{einsum} expression.
\end{theorem}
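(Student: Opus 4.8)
The plan is to prove the claim for a $\delta$-split; the $\delta$-merge case then follows immediately, since a $\delta$-merge is by definition the inverse operation and an equality of tensors may be read in either direction. Write $S$ for the set of index symbols occurring in the original expression and $\mathcal{X} = \prod_{s\in S}[d_s]$ for its set of global positions. The split symbol $a$ lies in $S$, the fresh symbol $b$ does not, and the introduced delta tensor $\delta_1$ forces $d_b = d_a$; hence the spliced expression uses the index symbols $S\cup\{b\}$ (note that $a$ still occurs, inside the string $ab$, even if all other occurrences of $a$ are renamed), and its set of global positions, call it $\hat{\mathcal{X}}$, consists of pairs $(\hat{x},c)$ made up of a global position $\hat{x}\in\mathcal{X}$ over $S$ together with a value $c\in[d_b]$ assigned to $b$. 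The two expressions have the same domain, because $\hat{I}$ is obtained from $I$ by replacing some occurrences of $a$ by $b$ and $d_a = d_b$.

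The heart of the argument is a single substitution observation: if $(\hat{x},c)\in\hat{\mathcal{X}}$ satisfies $c = \hat{x}_a$ — that is, if it assigns $b$ the same value that $\hat{x}$ assigns to $a$ — then for every index string $J$ over $S$ and every string $\hat{J}$ obtained from $J$ by replacing some of its occurrences of $a$ by $b$, we have $(\hat{x},c):\hat{J} = \hat{x}:J$, because each renamed slot now reads off the value of $b$, which coincides with the value of $a$. I would state this once and then apply it with $J$ ranging over $I_1,\dots,I_n$ and $I$.

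With the observation available, I would expand $\hat{T}(x)$ using the \emph{einsum} semantics. The delta factor $\delta_1\big((\hat{x},c):ab\big) = \delta_1(\hat{x}_a,c)$ equals the semiring unit $1$ when $c = \hat{x}_a$ and $0$ otherwise, so only the terms with $c = \hat{x}_a$ survive the aggregation. On those terms, the substitution observation rewrites the summation constraint $(\hat{x},c):\hat{I} = x$ as $\hat{x}:I = x$ and each factor $T_i\big((\hat{x},c):\hat{I}_i\big)$ as $T_i(\hat{x}:I_i)$, while the delta factor, being $1$, drops out of the product. The resulting sum of products is exactly the one defining $T(x)$, which gives $\hat{T} = T$.

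The only delicate point — and the expected main obstacle — is keeping the projection bookkeeping straight, since the strings $\hat{I}_1,\dots,\hat{I}_n,\hat{I}$ may each rename a different subset of their $a$-occurrences; isolating the substitution observation as a standalone lemma is precisely what makes the computation routine and avoids any case distinction over which occurrences were renamed.
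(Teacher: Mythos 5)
Your proposal is correct and follows essentially the same route as the paper's proof: expand the spliced expression, use the $\delta_1$ factor to annihilate all terms with differing values for $a$ and $b$, and then re-index the surviving sum via the bijection between global positions of the two expressions (your ``substitution observation'' is exactly the paper's step $(\ast)$). Isolating that observation as a standalone lemma is a clean presentational choice but not a different argument.
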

\begin{proof}
	Let
	$$E = \einsum{I_1, \ldots, I_n}{I}{\tensor[1]{T}{}, \ldots, \tensor[n]{T}{}} \text{\quad and\quad} F = \einsum{ab, \hat{I}_1, \ldots, \hat{I}_n}{\hat{I}}{\delta_1, \tensor[1]{T}{}, \ldots, \tensor[n]{T}{}}.$$
	We have $d_a=d_b$ for the lengths of the axes of $\delta_1$ corresponding to $a$ and $b$, because $\delta_1$ is the unit matrix and thus symmetric.
	Therefore, $\domain(E) = \domain(F)$ still holds even if $\hat{I} \neq I$.

	Let $\mathcal{X}_E,\mathcal{X}_F$ be the sets of global positions over the respective index symbols $S_E,S_F$ for the expressions $E$ and $F$.
	We show for every position $x\in\domain(F)$:
	\begin{align*}
		F(x)
		 & = \einsum{ab, \hat{I}_1, \ldots, \hat{I}_n}{\hat{I}}{\delta_1, \tensor[1]{T}{}, \ldots, \tensor[n]{T}{}}                      \\
		 & = \bigoplus\limits_{\substack{\hat{x}_f \in \mathcal{X}_F                                                                     \\ \hat{x}_f:\hat{I}=x}} \delta_1(\hat{x}_f:ab) \bigotimes\limits_{i = 1}^{n} T_i(\hat{x}_f:\hat{I}_i) && (\emph{einsum}) \\
		 & = \bigoplus\limits_{\substack{\hat{x}_f \in \mathcal{X}_F                                                                     \\ \hat{x}_f:\hat{I}=x \\ \hat{x}_f:a = \hat{x}_f:b}} \delta_1(\hat{x}_f:ab) \bigotimes\limits_{i = 1}^{n} T_i(\hat{x}_f:\hat{I}_i) \\
		 & \quad \oplus \bigoplus\limits_{\substack{\hat{x}_f \in \mathcal{X}_F                                                          \\ \hat{x}_f:\hat{I}=x \\ \hat{x}_f:a \neq \hat{x}_f:b}} \delta_1(\hat{x}_f:ab) \bigotimes\limits_{i = 1}^{n} T_i(\hat{x}_f:\hat{I}_i) && (\emph{commutativity of }\oplus) \\
		 & = \bigoplus\limits_{\substack{\hat{x}_f \in \mathcal{X}_F                                                                     \\ \hat{x}_f:\hat{I}=x \\ \hat{x}_f:a = \hat{x}_f:b}} 1 \otimes \bigotimes\limits_{i = 1}^{n} T_i(\hat{x}_f:\hat{I}_i)
		\oplus \bigoplus\limits_{\substack{\hat{x}_f \in \mathcal{X}_F                                                                   \\ \hat{x}_f:\hat{I}=x \\ \hat{x}_f:a \neq \hat{x}_f:b}} 0 \otimes \bigotimes\limits_{i = 1}^{n} T_i(\hat{x}_f:\hat{I}_i) && (\emph{definition of }\delta_1) \\
		 & = \bigoplus\limits_{\substack{\hat{x}_f \in \mathcal{X}_F                                                                     \\ \hat{x}_f:\hat{I}=x \\ \hat{x}_f:a = \hat{x}_f:b}} \bigotimes\limits_{i = 1}^{n} T_i(\hat{x}_f:\hat{I}_i)  && (\text{neutral element}) \\
		 & = \bigoplus\limits_{\substack{\hat{x}_e \in \mathcal{X}_E                                                                     \\ \hat{x}_e:I=x}} \bigotimes\limits_{i = 1}^{n} T_i(\hat{x}_e:I_i) && (\ast\text{ see below})                              \\
		 & = \einsum{I_1, \ldots, I_n}{I}{\tensor[1]{T}{}, \ldots, \tensor[n]{T}{}}(x)                              &  & (\emph{einsum}) \\
		 & = E(x)
	\end{align*}
	$(\ast)$ This equality switches between the index symbol sets $S_E$ and $S_F = S_E \cup \{ b \}$.
	Given the restrictions $\hat{x}_f:\hat{I} = x = \hat{x}_e:I$ and $\hat{x}_f:a = \hat{x}_f:b$,
	each $\hat{x}_e \in \mathcal{X}_E$ (a global position in $E$) corresponds \emph{uniquely} to an $\hat{x}_f \in \mathcal{X}_F$ (a global position in $F$) and vice-versa.
	Thus, the sums are equal.
\end{proof}

Now, we are prepared to prove the correctness of our renaming algorithm, and thus \hyperref[thm:nesting:general]{Theorem~\ref*{thm:nesting:general}} by showing that our application of the formally verified rewriting rules reaches the same result.
\begin{proof}
	Recall the three rewriting steps: $\delta$-splitting, restricted denesting according to \hyperref[thm:nesting:simple]{Theorem~\ref*{thm:nesting:simple}}, and $\delta$-merging.
	We start with the nested expression
	$$V = \einsum{\hat{I}_u, I_{m+1}, \ldots , I_{m+n}}{I_v}{U, T_{m+1}, \ldots, T_{m+n}},$$
	where
	$$U = \einsum{I_1,\ldots,I_m}{I_u}{T_1,\ldots,T_m}.$$
	We can assume that the inner and outer expressions have no index symbols in common. Otherwise, we simply replace index symbols in either expression until this condition is satisfied.

	(1) \emph{$\delta$-splitting}: Using $I_u = (a_1, \ldots, a_d)$, $\hat{I}_u = (b_1, \ldots, b_d)$, and $X=(x_1, \ldots, x_d)$, which is an index string with new, pairwise distinct index symbols $x_i$, we apply $d$ $\delta$-splits in each expression to replace both $I_u$ and $\hat{I}_u$ with $X$.
	This yields the intermediary expressions
	$$U = \einsum{a_1 x_1, \ldots, a_d x_d, I_1, \ldots, I_m}{X}{\delta_1, \ldots, \delta_1, T_1,\ldots,T_m}$$
	and
	$$V = \einsum{b_1 x_1, \ldots, b_d x_d, X, I_{m+1}, \ldots , I_{m+n}}{I_v}{\delta_1, \ldots, \delta_1, U, T_{m+1}, \ldots, T_{m+n}}.$$
	Here, we only rename exactly one occurrence of the respective index symbol $a_i$ or $b_i$ with each $\delta$-split, meaning that all index strings other than $I_u$ and $\hat{I}_u$ remain unchanged.

	(2) \emph{Restricted denesting}: Now, the result index string in the inner expression and the corresponding input index string in the outer expression match.
	Using \hyperref[thm:nesting:simple]{Theorem~\ref*{thm:nesting:simple}}, we get:
	$$V = \einsum{b_1 x_1, \ldots, b_d x_d, a_1 x_1, \ldots, a_d x_d, I_1,\ldots, I_{m+n}}{I_v}{\delta_1, \ldots, \delta_1, T_1,\ldots,\ldots, T_{m+n}}$$

	(3) \emph{$\delta$-merging}: The expression has already been denested after the previous step. However, we still want to simplify the expression by getting rid of the delta tensors. Therefore, we iteratively merge the $2d$ delta tensors which have been introduced by $\delta$-splitting.
	For each index string $a_i x_i$ or $b_i x_i$ corresponding to a delta tensor, we retain the index symbol $x_i$ when merging.
	Index strings $x_i x_j$ can occur as intermediary results if the $a_i$ or $b_i$ contain duplicate index symbols.
	In these cases, we retain the index symbol $x_{\min(i,j)}$.

	For the proof of \hyperref[thm:nesting:general]{Theorem~\ref*{thm:nesting:general}}, it remains to be shown that the expression after the $\delta$-merging step has the form
	$$V = \einsum{\nu^\ast(I_1), \ldots, \nu^\ast(I_{m + n})}{\nu^\ast(I_v)}{T_1, \ldots, T_{m + n}},$$
	where $\nu$ is an index symbol map that satisfies the conditions encoded in the index symbol graph of the original nested expression, and $\nu^\ast$ is the extension of $\nu$ to index strings.

	This is the case because the graph edges that connect symbols in the same position $i\in[d]$ are represented by the $\delta$-tensors with index strings $a_i x_i$ and $b_i x_i$,
	while the graph edges between nodes that contain the same index symbol $s$ at different positions $i,j\in[d]$ are represented by that symbol appearing in the index strings of multiple $\delta$-tensors,
	namely $s x_i$ and $s x_j$.
	Thus, the merging step implicitly computes the connected components of the index symbol graph.
\end{proof}

\section{Auxiliary simplification rules}
\label{sec:auxiliary}

In this section, we provide a few additional equivalence rules for \emph{einsum} expressions.

A semiring $(R, \oplus, \otimes)$ has a $1$-element that is the neutral element of its multiplication $\otimes$.
Identifying a similar neutral element in an \emph{einsum} expression is more difficult.
For example, both delta tensors and constant all-ones tensors, i.e., tensors in which the entry at every position is $1$, can potentially be `neutral' operands, depending on their index strings.
Reshaping an \emph{einsum} expression by introducing or removing such neutral operands matters,
for example, when computing the symbolic derivative of an \emph{einsum} expression \cite{tensorcalculus}.
During automatic symbolic differentiation, unnecessary delta tensors are systematically created, and all-ones tensors may have to be introduced in order to avoid the issue of disappearing index information, as mentioned in the introduction.

The following proofs demonstrate when and how \emph{einsum} expressions involving delta tensors and constant tensors (including all-ones tensors) can be equivalently reshaped.
In particular, we conclude that there is always a way to omit delta tensors, which is not true in other versions of \emph{einsum} such as the one used by NumPy.
Another simple yet important semantic equivalence is the identity operation.
The following semantic equivalence rule can be used as a final simplification step after applying other rules and arriving at the most basic \emph{einsum} expression $\einsum{I}{I}{T}$.
\begin{lemma}[Identity operation]
	\label{lem:identityOp}
	Let $T$ be a tensor of order $o\in\N$.
	If $I$ is an index string of length $o$ with \emph{pairwise distinct} index symbols, then $T = \einsum{I}{I}{T}$.
\end{lemma}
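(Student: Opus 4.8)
The plan is to unfold the definition of the \emph{einsum} semantics and observe that, when the single input index string coincides with the output index string and consists of pairwise distinct symbols, the projection map $\hat{x} \mapsto \hat{x}:I$ becomes a bijection, so the aggregation degenerates to a single term.

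First I would fix notation: write $I = (s_1, \ldots, s_o)$, let $S = \sigma(I) = \{s_1, \ldots, s_o\}$ be the set of index symbols of the expression, which has exactly $o$ elements since the $s_k$ are pairwise distinct, and let $\mathcal{X} = \prod_{s \in S}[d_s]$ be the set of global positions. Note that $S$ really is all of the index symbols occurring in $\einsum{I}{I}{T}$, since there is only one argument and the output string is also $I$. Because $d_{s_k}$ is by definition the length of the $k$-th axis of $T$, we get $\domain(\einsum{I}{I}{T}) = [d_{s_1}] \times \ldots \times [d_{s_o}] = \domain(T)$, so both sides have the same domain.

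Next I would argue that for each $x = (x_1, \ldots, x_o) \in \domain(T)$ there is exactly one global position $\hat{x} \in \mathcal{X}$ with $\hat{x}:I = x$: since the symbols $s_1, \ldots, s_o$ are pairwise distinct, the condition $\hat{x}:I = x$ is equivalent to $\hat{x}_{s_k} = x_k$ for every $k \in [o]$, which determines every coordinate of $\hat{x}$ uniquely (and consistently, as no symbol is constrained twice). For this unique $\hat{x}$ we also have $\hat{x}:I = x$ on the input side, so plugging into the definition of \emph{einsum},
\[
\einsum{I}{I}{T}(x) = \bigoplus_{\substack{\hat{x} \in \mathcal{X} \\ \hat{x}:I = x}} T(\hat{x}:I) = T(x),
\]
the aggregation being over the single term $T(x)$. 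Since this holds for every $x \in \domain(T)$, we conclude $T = \einsum{I}{I}{T}$.

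The only real content is the bijectivity observation in the third step; everything else is bookkeeping with the definition. The hypothesis that the index symbols are pairwise distinct is exactly what makes the projection injective (without it, e.g.\ $\einsum{ii}{ii}{T}$ would only retain the diagonal of $T$), and the fact that the input and output strings are literally equal is what makes it surjective onto $\domain(T)$ with no surviving aggregated coordinates. I do not expect any genuine obstacle here; the main point to state carefully is the ``exactly one $\hat{x}$'' claim, together with the remark that there are no further index symbols in the expression beyond those in $I$.
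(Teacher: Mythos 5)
Your proposal is correct and follows essentially the same route as the paper's own proof: unfold the \emph{einsum} definition, note that $\domain\left(\einsum{I}{I}{T}\right) = \domain(T)$, and observe that pairwise distinctness of the symbols makes the projection $\hat{x} \mapsto \hat{x}:I$ a bijection so the aggregation collapses to the single term $T(x)$. The paper states the identification $\mathcal{X} = \domain(T)$ as ``evident,'' whereas you spell out the uniqueness of the satisfying $\hat{x}$ explicitly, which is a minor (and welcome) difference in level of detail rather than in approach.
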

\begin{proof}
	Let $\mathcal{X}$ be the set of global positions over the index symbols in $I$. Evidently, $\domain \left(\einsum{I}{I}{T}\right) = \domain (T) = \mathcal{X}$.
	At every position $x \in\domain (T)$:
	$$
		\einsum{I}{I}{T} (x)
		= \bigoplus\limits_{\substack{\hat{x}\in\mathcal{X} \\ \hat{x}:I=x}} T(\hat{x}:I)
		= T(x)
	$$
\end{proof}
Note that if we have duplicate index symbols in the index string $I$, then $\mathcal{X} \neq \domain(T)$.
For example, for $I = (i, i)$ and a corresponding axis length $d_i$, we have $\mathcal{X} = [d_i]$ but $\domain(T) = [d_i]\times[d_i]$.
As a consequence, the aggregation can be empty and thus for $\einsum{I}{I}{T}$ to have additional $0$ entries.

\begin{lemma}[Neutral combination]
	\label{lem:neutral_combination}
	Let $T$ be a tensor of order $o\in\N$ and $I, J$ be index strings such that $\sigma(J)\subseteq\sigma(I)$.
	Then:
	$$\einsum{I, J}{I}{T, \mathbf{1}_J} = \einsum{I}{I}{T},$$
	where $\mathbf{1}_J$ is a tensor that takes the scalar value $1$ at every position.
\end{lemma}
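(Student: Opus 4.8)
The plan is to unfold the definition of the semantics of \emph{einsum} on both sides of the claimed equality and to observe that the all-ones tensor $\mathbf{1}_J$ contributes only copies of the $1$-element of the semiring $R$, which is the neutral element of the combination operation $\otimes$. So the argument is essentially a one-line semiring simplification, wrapped in the usual bookkeeping about index symbols and domains.

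First I would verify that both expressions range over the same set of global positions and have the same domain. Since $\sigma(J) \subseteq \sigma(I)$, the set of index symbols occurring in $\einsum{I, J}{I}{T, \mathbf{1}_J}$ is exactly $\sigma(I)$, which is also the set of index symbols occurring in $\einsum{I}{I}{T}$. Hence both expressions share the global position set $\mathcal{X} = \prod_{s \in \sigma(I)} [d_s]$. Both have $I$ as output index string, so with $I = (s_1, \ldots, s_o)$ both have domain $[d_{s_1}] \times \ldots \times [d_{s_o}]$, regardless of whether $I$ contains duplicate symbols.

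Next I would fix an arbitrary position $x$ in this common domain and expand the left-hand side with the \emph{einsum} definition, obtaining the aggregation of $\tensor{T}{\hat{x}:I} \otimes \mathbf{1}_J(\hat{x}:J)$ over all $\hat{x} \in \mathcal{X}$ with $\hat{x}:I = x$. Because $\mathbf{1}_J$ takes the value $1$ at every position, in particular $\mathbf{1}_J(\hat{x}:J) = 1$, and since $1$ is the neutral element of $\otimes$, each summand simplifies to $\tensor{T}{\hat{x}:I}$. The resulting aggregation of $\tensor{T}{\hat{x}:I}$ over all $\hat{x} \in \mathcal{X}$ with $\hat{x}:I = x$ is precisely the \emph{einsum} definition of the right-hand side evaluated at $x$, which concludes the proof.

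I do not expect a genuine obstacle here; the only point that needs care is the index-symbol bookkeeping in the first step. The hypothesis $\sigma(J) \subseteq \sigma(I)$ is essential: if $J$ contained a symbol not appearing in $I$, then adding $\mathbf{1}_J$ would enlarge the global position set $\mathcal{X}$ and thereby change the range of the aggregation, breaking the equality in general — an effect analogous to the remark following \hyperref[lem:identityOp]{Lemma~\ref*{lem:identityOp}}.
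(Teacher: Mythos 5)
Your proof is correct and follows essentially the same route as the paper's: establish that both sides share the global position set over $\sigma(I)$ and the same domain, then expand the left-hand side and use that $\mathbf{1}_J(\hat{x}:J) = 1$ is the neutral element of $\otimes$. Your closing remark about why $\sigma(J)\subseteq\sigma(I)$ is essential matches the note the paper appends after its proof.
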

\begin{proof}
	Let $U = \einsum{I,J}{I}{T, \mathbf{1}_J}$ and $V = \einsum{I}{I}{T}$.
	Further, let $\mathcal{X}$ be the set of global positions over the index symbols in $I$.
	Evidently, $\domain(U) = \domain(V)$.
	We show for every position $x\in\domain(U)$:
	\begin{equation*}
		U(x)
		= \bigoplus\limits_{\substack{\hat{x}\in\mathcal{X} \\ \hat{x}:I=x}} T(\hat{x}:I) \otimes \mathbf{1}_J(\hat{x}:J)
		= \bigoplus\limits_{\substack{\hat{x}\in\mathcal{X} \\ \hat{x}:I=x}} T(\hat{x}:I) \otimes 1
		= V(x)
	\end{equation*}
\end{proof}
Note: If the all-ones tensor introduces new index symbols to the expression (i.e.\ $\sigma(J) \not\subseteq \sigma(I)$), the same does not hold, because that would add additional global positions to $\mathcal{X}$.

\begin{lemma}[Constant vectorization]
	\label{lem:vectorization}
	Let $C$ be a tensor of order $o$ with the same entry $c$ in every position $(x_1, \sdots, x_o)\in\domain(C)$.
	Then for an index string $I=(i_1, \dots, i_o)$, it holds that
	$$\einsum{~, i_1, \sdots, i_o}{I}{c, \mathbf{1}_1, \sdots, \mathbf{1}_o} = \einsum{I}{I}{C},$$
	where the vectors $\mathbf{1}_i$ are all-ones vectors with lengths corresponding to the axes of $C$.
\end{lemma}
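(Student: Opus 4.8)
The plan is to evaluate both \emph{einsum} expressions directly from the semantic definition and check that they aggregate the same summands over the same index set. First I would verify that the two sides have the same domain: the left-hand expression has output index string $I = (i_1, \sdots, i_o)$, where the length of the axis labelled $i_k$ is fixed by the operand $\mathbf{1}_k$, and the right-hand expression has the same output index string with axis lengths inherited from $C$; since by hypothesis $\mathbf{1}_k$ has length equal to the $k$-th axis length of $C$, both domains equal $[d_{i_1}] \times \sdots \times [d_{i_o}]$. Moreover, the index symbols occurring in either expression are exactly those of $\sigma(I)$ --- the scalar $c$ contributes the empty index string, each $\mathbf{1}_k$ contributes $(i_k)$, and $C$ contributes $I$ --- so both expressions share the same set of global positions $\mathcal{X} = \prod_{s \in \sigma(I)} [d_s]$.

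Next, I fix a position $x$ in the common domain and expand the left-hand side via the \emph{einsum} semantics:
$$\einsum{~, i_1, \sdots, i_o}{I}{c, \mathbf{1}_1, \sdots, \mathbf{1}_o}(x) = \bigoplus\limits_{\substack{\hat{x}\in\mathcal{X} \\ \hat{x}:I=x}} c \otimes \bigotimes\limits_{k=1}^{o} \mathbf{1}_k(\hat{x}:i_k),$$
using that the scalar operand contributes the factor $c$ for every $\hat{x}$ because $\hat{x}:\lambda = (~)$. Since every factor $\mathbf{1}_k(\hat{x}:i_k)$ is the $1$-element of $R$, each product collapses to $c$. On the other hand, $C$ being the constant tensor with value $c$ gives $\einsum{I}{I}{C}(x) = \bigoplus_{\hat{x}:I=x} C(\hat{x}:I) = \bigoplus_{\hat{x}:I=x} c$. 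Thus both sides aggregate the identical summand $c$ over the identical index set $\{\hat{x}\in\mathcal{X} : \hat{x}:I = x\}$, so they are equal; since this holds for every $x$, the tensors coincide.

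I do not anticipate a genuine obstacle; the only point deserving care is when $I$ has a repeated index symbol. In that case $\mathcal{X}$ is a proper slice of $\domain(C)$, and the aggregation set $\{\hat{x} : \hat{x}:I = x\}$ is empty precisely when the coordinates of $x$ at the repeated positions disagree --- but then both sides evaluate to the empty aggregate, i.e.\ the $0$-element of $R$, so the stated equality still holds verbatim. When $I$ consists of pairwise distinct symbols, the argument shortens: the right-hand aggregation reduces to a single term by \hyperref[lem:identityOp]{Lemma~\ref*{lem:identityOp}}, and the outer-product structure of the left-hand side likewise leaves exactly one surviving term, both equal to $c$.
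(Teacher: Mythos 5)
Your proof is correct and follows essentially the same route as the paper's: expand both sides via the \emph{einsum} semantics over the common global position set, collapse each summand on the left to $c$ using $\hat{x}:\lambda = (~)$ and the neutrality of the all-ones factors, and match it termwise with $C(\hat{x}:I) = c$ on the right. Your additional remark about repeated index symbols and the empty aggregate is a sound extra check but does not change the argument, which the paper states in a single chain of equalities.
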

\begin{proof}
	Let $U = \einsum{, i_1, \sdots, i_o}{I}{c, \mathbf{1}_1, \sdots, \mathbf{1}_o}$ and $V = \einsum{I}{I}{C}$.
	Further, let $\mathcal{X}$ be the set of global positions over the index symbols in $I$.
	Evidently, $\domain(U) = \domain(V)$.
	At every position $x = (x_1, \sdots, x_o) \in\domain(U)$:
	$$
		U(x) =
		\bigoplus\limits_{\substack{\hat{x}\in\mathcal{X} \\ \hat{x}:I=x}} c(\hat{x}:()) \otimes \bigotimes\limits_{j=1}^o \mathbf{1}_j(\hat{x}:i_j)
		= \bigoplus\limits_{\substack{\hat{x}\in\mathcal{X} \\ \hat{x}:I=x}} c \otimes 1
		= \bigoplus\limits_{\substack{\hat{x}\in\mathcal{X} \\ \hat{x}:I=x}} C(\hat{x}:I)
		= V(x).
	$$
\end{proof}

\begin{lemma}[Delta substitution]
	\label{lem:delta_substitution}
	Let $o\in\mathbb{N}$ and let $I$ be an index string of length $o$ with pairwise distinct index symbols.
	Further, let $\delta_o$ be a delta tensor of order $o$. Then
	$\delta_o = \einsum[\#]{I}{II}{\mathbf{1}_o}$,
	where $\mathbf{1}_o$ is an all-ones tensor of order $o$ with axis lengths matching $\delta_o$.
\end{lemma}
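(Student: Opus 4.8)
The plan is to evaluate the right-hand side directly from the definition of the \emph{einsum} semantics and verify that it agrees entrywise with $\delta_o$. First I would check that the two sides share a domain. Writing $I = (i_1, \sdots, i_o)$ with the $i_j$ pairwise distinct, the output string is $II = (i_1, \sdots, i_o, i_1, \sdots, i_o)$, which has length $2o$; its $j$-th and $(o{+}j)$-th axes both carry the symbol $i_j$ and hence both have length $d_{i_j}$, which by the axis-length constraint is the length of the $j$-th axis of $\mathbf{1}_o$. Since that axis length is exactly the $j$-th (and $(o{+}j)$-th) axis length of $\delta_o$, we get $\domain\left(\einsum{I}{II}{\mathbf{1}_o}\right) = [d_{i_1}]\times\sdots\times[d_{i_o}]\times[d_{i_1}]\times\sdots\times[d_{i_o}] = \domain(\delta_o)$.

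Next I would fix an arbitrary position $x = (p_1, \sdots, p_o, q_1, \sdots, q_o)$ in this common domain and unfold the semantics. Because $\sigma(I) = \{i_1, \sdots, i_o\}$, the set of global positions is $\mathcal{X} = \prod_{j=1}^{o}[d_{i_j}]$, so a global position $\hat{x}$ is determined by the values $\hat{x}_{i_1}, \sdots, \hat{x}_{i_o}$. The selection condition $\hat{x}:II = x$ then unpacks into the system $\hat{x}_{i_j} = p_j$ and $\hat{x}_{i_j} = q_j$ for every $j \in [o]$. This system is solvable precisely when $(p_1, \sdots, p_o) = (q_1, \sdots, q_o)$, in which case the unique solution is $\hat{x}_{i_j} = p_j$.

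The argument then splits into the two cases. If $(p_1, \sdots, p_o) = (q_1, \sdots, q_o)$, the aggregation in $\einsum{I}{II}{\mathbf{1}_o}(x) = \bigoplus_{\hat{x}:II = x} \mathbf{1}_o(\hat{x}:I)$ runs over a single global position $\hat{x}$ and its term is $\mathbf{1}_o(\hat{x}:I) = \mathbf{1}_o(p_1, \sdots, p_o) = 1$, so the entry is $1$. If $(p_1, \sdots, p_o) \neq (q_1, \sdots, q_o)$, no admissible $\hat{x}$ exists, so the aggregation is empty and the entry is the $0$-element of $R$ (the same empty-aggregation convention already invoked in the remark following Lemma~\ref{lem:identityOp}). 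These two cases are exactly the defining cases of $\delta_o$, which finishes the proof.

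I do not anticipate a real obstacle: the statement is essentially a bookkeeping exercise in the \emph{einsum} semantics, structurally parallel to Lemma~\ref{lem:delta_substitution}'s companion results. The one point that warrants care is the role of the pairwise-distinctness hypothesis on $I$: it is what guarantees both that $\mathcal{X}$ coincides with $\domain(\mathbf{1}_o)$ and that each symbol of $I$ occurs exactly twice in $II$, so that the condition $\hat{x}:II = x$ decouples into the per-coordinate equalities $\hat{x}_{i_j} = p_j = q_j$; without distinctness the correspondence with $\delta_o$ would fail.
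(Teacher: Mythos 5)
Your proposal is correct and follows essentially the same route as the paper's proof: unfold the \emph{einsum} semantics at a position $x$, observe that the condition $\hat{x}:II=x$ has exactly one solution when the two halves of $x$ agree and none otherwise, and match the two cases to the definition of $\delta_o$. Your version merely spells out the domain check and the role of the pairwise-distinctness hypothesis more explicitly than the paper does.
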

\begin{proof}
	Let $\mathcal{X}$ be the set of global positions over the index symbols in $I$.
	For all positions $x = (x_1, \sdots, x_{2o}) \in \domain(\delta_o) = \domain\left(\einsum[\#]{I}{II}{\mathbf{1}_o}\right)$, it holds that
	\begin{align*}
		\einsum[\#]{I}{II}{\mathbf{1}_o} (x)
		 & = \sum\limits_{\substack{\hat{x}\in\mathcal{X}                                            \\ \hat{x}:II=x}} \mathbf{1}_o(\hat{x}:I)  && (\text{\emph{einsum}}) \\
		 & =  \begin{cases}
			      1, (x_1, \sdots, x_o) = (x_{o+1}, \sdots, x_{2o}) \\
			      0, \text{otherwise}
		      \end{cases} &  &
		\begin{array}{@{}l}
			(\text{there is exactly one such }\hat{x}.) \\
			(\text{there is no such }\hat{x}.)
		\end{array}                                                \\
		 & = \delta_o (x)                                       &  & (\text{definition of }\delta_o)
	\end{align*}
\end{proof}

\begin{corollary}[Delta (de)composition]
	For $o\geq 1$, it is possible to construct any delta tensor $\delta_o$ from $o$ unit matrices $\delta_1$.
\end{corollary}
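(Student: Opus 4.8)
The plan is to exhibit an explicit \emph{einsum} expression that assembles $\delta_o$ out of $o$ copies of the unit matrix $\delta_1 = \mathds{1}$, and then to verify it directly from the semantics. Concretely, I would pick pairwise distinct index symbols $i_1, j_1, \ldots, i_o, j_o$ and claim
$$\delta_o = \einsum{i_1 j_1, \ldots, i_o j_o}{i_1 \cdots i_o\, j_1 \cdots j_o}{\delta_1, \ldots, \delta_1},$$
where the right-hand side has exactly $o$ factors $\delta_1$, the $k$-th of which is the unit matrix whose two axes both have the $k$-th axis length of the target $\delta_o$. (As remarked after the definition of delta tensors, these axis lengths are fixed by context, so the expression is well-typed, and both sides have domain $[d_1]\times\cdots\times[d_o]\times[d_1]\times\cdots\times[d_o]$.) Since the right-hand side is built solely from unit matrices, establishing this identity proves the corollary.

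To verify the identity I would unfold the \emph{einsum} semantics at an arbitrary position $(p_1, \ldots, p_o, q_1, \ldots, q_o) \in \domain(\delta_o)$. The output index string $i_1 \cdots i_o\, j_1 \cdots j_o$ lists every index symbol of the expression exactly once, so there is a single global position $\hat{x}$ with $\hat{x}:(i_1 \cdots i_o\, j_1 \cdots j_o) = (p_1, \ldots, p_o, q_1, \ldots, q_o)$, namely the one sending $i_k \mapsto p_k$ and $j_k \mapsto q_k$. Hence the aggregation collapses to a single term, and the expression evaluates to $\bigotimes_{k=1}^{o} \delta_1(p_k, q_k)$. If $p_k = q_k$ for every $k$, each factor is the $1$-element of $R$ and the product is $1$; if $p_k \neq q_k$ for some $k$, that factor is the $0$-element of $R$, which is absorbing, so the product is $0$. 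This is precisely $\delta_o(p_1, \ldots, p_o, q_1, \ldots, q_o)$, and since the two tensors also share a domain, they are equal.

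An equivalent route, more in the spirit of the earlier sections, is induction on $o$: the base case $o = 1$ is immediate, and for the step one checks the one-line identity $\delta_o = \einsum{i_1 \cdots i_{o-1}\, j_1 \cdots j_{o-1}, i_o j_o}{i_1 \cdots i_o\, j_1 \cdots j_o}{\delta_{o-1}, \delta_1}$, substitutes for $\delta_{o-1}$ the flat expression in $o-1$ unit matrices supplied by the inductive hypothesis (after renaming its symbols to be disjoint from $i_o, j_o$), and flattens with Theorem~\ref{thm:nesting:simple}, whose hypothesis holds because the only symbols shared between the inner and outer expressions are those of the shared string $i_1 \cdots i_{o-1}\, j_1 \cdots j_{o-1}$. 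Either way, I do not anticipate a genuine obstacle: the points needing care are the purely clerical matching of axis lengths among the $\delta_1$ factors and, in the inductive version, the disjointness condition of restricted denesting; the one mathematically load-bearing fact is that $0$ is absorbing in a semiring, which is exactly what forces the product of diagonal entries to vanish off the diagonal.
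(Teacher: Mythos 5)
Your proof is correct, but it takes a genuinely different route from the paper's. The paper disposes of the corollary in one line by chaining \hyperref[lem:delta_substitution]{Lemma~\ref*{lem:delta_substitution}}, which gives $\delta_o = \einsum[\#]{I}{II}{\mathbf{1}_o}$, with \hyperref[lem:vectorization]{Lemma~\ref*{lem:vectorization}}, which decomposes the all-ones tensor $\mathbf{1}_o$ into $o$ all-ones vectors; each of those vectors, read against the doubled occurrence of its symbol in the output string $II$, is itself a $\delta_1$ by \hyperref[lem:delta_substitution]{Lemma~\ref*{lem:delta_substitution}} with $o=1$, so in the paper's implied construction every unit matrix carries a \emph{repeated} index string $i_k i_k$ and the off-diagonal entries vanish because the aggregation over global positions is empty. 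You instead exhibit the flat expression $\einsum{i_1 j_1, \ldots, i_o j_o}{i_1 \cdots i_o\, j_1 \cdots j_o}{\delta_1, \ldots, \delta_1}$ with $2o$ pairwise distinct symbols and verify it pointwise, so the vanishing off the diagonal comes from the absorbing property of $0$ under $\otimes$ rather than from an empty sum. Both arguments are sound --- the paper itself relies on $0 \otimes x = 0$ in the $\delta$-split/merge proof, so you assume nothing the paper does not --- and your version has the advantage of being self-contained and of making the constructed expression explicit, whereas the paper's reuses the already-proven auxiliary lemmas and avoids any pointwise computation. Your secondary inductive route via \hyperref[thm:nesting:simple]{Theorem~\ref*{thm:nesting:simple}} is also valid, and you correctly flag the two points that need care there: the disjointness hypothesis of restricted denesting and the matching of axis lengths.
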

\begin{proof}
	Follows directly by combining \hyperref[lem:vectorization]{Lemma~\ref*{lem:vectorization}} and \hyperref[lem:delta_substitution]{Lemma~\ref*{lem:delta_substitution}}.
\end{proof}

\begin{corollary}[Delta removal]
	Any delta tensor or constant tensor can be removed from an \emph{einsum} expression, leaving only a single scalar (which can be omitted if it is $1$) and exactly one all-ones vector for every index symbol that does not appear elsewhere in the \emph{einsum} expression.
\end{corollary}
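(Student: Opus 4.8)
The plan is to eliminate the delta and constant tensors one operand at a time: rewrite each such operand as an equivalent small \emph{einsum} expression built only from a scalar and all-ones tensors, denest that subexpression into the surrounding expression, and finally consolidate the all-ones tensors that accumulate.

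I would first dispose of the delta tensors. Any delta-tensor operand $\delta_o$, occurring in the ambient expression with some index string of length $2o$, equals --- by \hyperref[lem:delta_substitution]{Lemma~\ref*{lem:delta_substitution}} --- the nested subexpression $\einsum{I}{II}{\mathbf{1}_o}$ for a fresh index string $I$ of length $o$ with pairwise distinct symbols; since \emph{einsum} is compositional, performing this substitution does not change the value of the whole expression. The resulting nested expression is then flattened with \hyperref[thm:nesting:general]{Theorem~\ref*{thm:nesting:general}}. Restricted denesting (\hyperref[thm:nesting:simple]{Theorem~\ref*{thm:nesting:simple}}) does not suffice here, because the inner output string $II$ need not match the ambient index string of $\delta_o$ when that string has repeated or crossed symbols; this is precisely the situation handled by the index symbol graph of \hyperref[thm:nesting:general]{Theorem~\ref*{thm:nesting:general}}, which merely identifies the symbols that the delta structure forces to coincide. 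After doing this for every delta operand, every non-original operand is an all-ones, hence constant, tensor.

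Next I would turn every constant-tensor operand $C$ of order $o \geq 1$ --- the all-ones tensors just produced as well as any constant tensors present originally --- into a scalar plus all-ones vectors. By \hyperref[lem:identityOp]{Lemma~\ref*{lem:identityOp}} we have $C = \einsum{K}{K}{C}$ for a fresh distinct-symbol string $K$, and \hyperref[lem:vectorization]{Lemma~\ref*{lem:vectorization}} rewrites this as $\einsum{~, k_1, \ldots, k_o}{K}{c, \mathbf{1}_1, \ldots, \mathbf{1}_o}$; substituting and again denesting via \hyperref[thm:nesting:general]{Theorem~\ref*{thm:nesting:general}} replaces $C$ by the scalar $c$ together with one all-ones vector per axis of $C$ (order-$0$ constant operands are already scalars, and $c = 1$ for the all-ones tensors coming from the delta step). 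Now every operand is an original tensor, a scalar, or an all-ones vector, which I would consolidate as follows. All scalar operands have empty index strings and thus contribute the same position-independent $\otimes$-factor to every aggregated term, so by the semiring axioms and the definition of \emph{einsum} they collapse into a single scalar $c$, which may be dropped when $c = 1$ by \hyperref[lem:neutral_combination]{Lemma~\ref*{lem:neutral_combination}} with the empty index string $\lambda$. For the all-ones vectors: whenever a symbol $s$ also occurs in the index string of some non-all-ones operand $A$, nesting via \hyperref[thm:nesting:simple]{Theorem~\ref*{thm:nesting:simple}} isolates the binary subexpression over $A$ and $\mathbf{1}_s$, and \hyperref[lem:neutral_combination]{Lemma~\ref*{lem:neutral_combination}} then removes $\mathbf{1}_s$; whenever $s$ occurs only in all-ones vectors, the analogous manoeuvre (again via \hyperref[lem:neutral_combination]{Lemma~\ref*{lem:neutral_combination}} and \hyperref[lem:identityOp]{Lemma~\ref*{lem:identityOp}}) merges those vectors into a single $\mathbf{1}_s$, which cannot be removed without changing the value. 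This leaves precisely one scalar and exactly one all-ones vector for each index symbol that does not appear elsewhere.

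I expect the main obstacle to be the denesting bookkeeping in the first two steps: for each replaced operand one must verify that the substituted subexpression is itself a syntactically valid \emph{einsum} expression, that the hypotheses of \hyperref[thm:nesting:general]{Theorem~\ref*{thm:nesting:general}} are satisfied (fresh inner symbols, the correct index symbol graph), and that the repetitions in the ambient index string of that operand are faithfully absorbed by the symbol map. The closing consolidation is conceptually straightforward but also requires care, since one must not delete an all-ones vector whose symbol is still needed to pin down an axis of the output tensor --- which is exactly why a single such vector is kept for each orphan symbol.
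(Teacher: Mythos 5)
Your argument is sound and reaches the stated conclusion, but it routes the delta tensors through genuinely different machinery than the paper does. The paper's proof removes each delta tensor directly by $\delta$-merging, first using \hyperref[lem:neutral_combination]{Lemma~\ref*{lem:neutral_combination}} to introduce all-ones vectors so that both index symbols of the delta tensor still appear elsewhere and no axis information is lost; constant tensors are handled separately via \hyperref[lem:vectorization]{Lemma~\ref*{lem:vectorization}} plus denesting, with redundant ones-vectors then dropped by \hyperref[lem:neutral_combination]{Lemma~\ref*{lem:neutral_combination}}. You avoid $\delta$-merging entirely: you rewrite each $\delta_o$ as $\einsum{I}{II}{\mathbf{1}_o}$ via \hyperref[lem:delta_substitution]{Lemma~\ref*{lem:delta_substitution}} and flatten with \hyperref[thm:nesting:general]{Theorem~\ref*{thm:nesting:general}}, whose index symbol graph performs exactly the identifications that a sequence of $\delta$-merges would, thereby reducing the delta case to the constant-tensor case and treating everything in one pipeline. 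The paper's route buys brevity and makes transparent why an all-ones vector must sometimes be introduced before a delta can be removed; your route buys uniformity and supplies an explicit final consolidation step --- collapsing the accumulated scalars, removing ones-vectors whose symbols occur elsewhere, and keeping exactly one per orphan symbol --- which the paper leaves implicit. Two bookkeeping points you should make explicit: \hyperref[thm:nesting:general]{Theorem~\ref*{thm:nesting:general}} is stated with the inner expression as the first operand, so commutativity is needed to move each replaced operand into position, and \hyperref[lem:identityOp]{Lemma~\ref*{lem:identityOp}} requires the fresh string to have pairwise distinct symbols, which your choice of fresh symbols guarantees.
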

\begin{proof}
	Any constant tensor can be reduced to a scalar and ones-vectors by applying \hyperref[lem:vectorization]{Lemma~\ref*{lem:vectorization}} and denesting.
	All of the all-ones vectors that do not introduce a new axis can be omitted according to \hyperref[lem:neutral_combination]{Lemma~\ref*{lem:neutral_combination}}.
	Likewise, any delta tensor can be removed with $\delta$-merging. To ensure that both index symbols of the $\delta$-tensor appear elsewhere in the expression, we can introduce all-ones vectors with \hyperref[lem:neutral_combination]{Lemma~\ref*{lem:neutral_combination}}.
\end{proof}

\section{Conclusions}
\label{sec:conclusions}

We have introduced a unified syntax and semantics for the einsum language and established semantic equivalence rules that provide a solid theoretical foundation for its practical use.
In particular, we proved that einsum expressions over arbitrary commutative semirings are commutative, associative, and distributive with respect to elementwise aggregation.
In addition, we developed general nesting and denesting rules to systematically restructure expressions, along with simplification rules that clarify when delta tensors or constant tensors can be introduced or removed without altering semantics.
Together, these results close the gap between the widespread practical adoption of einsum and its missing theoretical underpinnings,
creating a unified framework that supports principled reasoning about equivalence, differentiation, and optimization of tensor expressions across different computational settings.

\section*{Acknowledgements}
This work was supported by the Carl-Zeiss-Stiftung within the project `Interactive Inference'.

\nocite{*}
\bibliographystyle{fundam}
\bibliography{einsum}

%%%%%%%%%%%%%%%%%%%%%%%%%%%%%%%%%%%%%%%%%%%%%%%%%%%%%%%%%%%%%%%%%%%%%%

\end{document}